\documentclass[12pt,a4paper]{article}

\usepackage{amsmath,amssymb,amsthm}
\usepackage{booktabs}
\usepackage{graphicx}
\usepackage{geometry}
\usepackage{hyperref}
\hypersetup{colorlinks=true,linkcolor=blue,citecolor=blue,urlcolor=blue}
\usepackage[numbers]{natbib}
\usepackage{enumitem}
\usepackage{makecell}

\newtheorem{theorem}{Theorem}[section]
\newtheorem{lemma}[theorem]{Lemma}
\newtheorem{proposition}[theorem]{Proposition}

\newtheorem{definition}{Definition}[section]

\newtheorem{assumption}{Assumption}[section]

\newcommand{\R}{\mathbb{R}}
\newcommand{\p}{\mathbf{p}}
\newcommand{\w}{\mathbf{w}}
\newcommand{\cvec}{\mathbf{c}}

\begin{document}
	
	% ---------- Title and Abstract ----------
	\title{A Geometric Approach to the Transformation Problem of Values}
	\author{Jiyuan Lyu\thanks{Jiyuan Lyu, Postgraduate, School of Economics and Management,Beijing University of Technology,Email: \href{mailto:lyujiyuan@emails.bjut.edu.cn}{sergeylyu@163.com}}}
	\date{\today}
	\maketitle
	
	\begin{abstract}
		The reduction of complex labour to simple labour is an unresolved difficulty in Marx's labour theory of value, and a key obstacle that has prevented the transformation problem from being settled definitively. This paper proposes a two-step solution framework. First, we prove that as long as the macroeconomy generates a physical surplus, the reduction coefficients that respect the floor of labour-power reproduction form a bounded ``value feasible region''; within this region the two macro aggregate equalities can hold simultaneously for a reasonable range of the profit rate. Second, we propose a linear mapping method that exploits the observable structure of nominal wages and the reproduction floor constraint to systematically construct the implicit reduction coefficients from the value feasible region. We show that this mapping is a homeomorphism between the price feasible region and the value feasible region, and that it preserves the boundary structure. An empirical calibration based on China's 2017 inter-provincial input--output table with 1272 sectors shows that the reduction coefficients obtained by the mapping method substantially outperform the homogeneous labour method and the wage-proxy method in matching the macro profit share. 
		
		\vspace{1ex}
		\noindent\textbf{Keywords:} reduction of complex labour; transformation problem of values; feasible region; homeomorphism; input--output analysis
		
		\noindent\textbf{JEL Classification:} B51, C67, D33
	\end{abstract}
	
	% ================================================================
	\section{Introduction}
	% ================================================================
	
	The transformation problem of values---how commodity values are transformed into prices of production and how surplus value is transformed into average profit---is one of the most persistent and profound theoretical debates in Marxian political economy. Since the publication of Volume III of \textit{Capital} in 1894, scholars have debated for more than a century whether Marx's transformation procedure is logically consistent and whether the two macro aggregate equalities (total price equals total value, total profit equals total surplus value) can hold simultaneously.
	
	Since Bortkiewicz's classic critique of Volume III, the transformation problem has been among the most controversial issues in Marxian economics. With the rigorous formalisation of the Sraffian system and input--output analysis, many studies have argued that, given physical technology and the real wage, prices of production and the profit rate can be directly determined by a ``production of commodities by means of commodities'' system, so that ``value'' appears to be a redundant intermediary \cite{sraffa_production_2016, kurz_theory_1997, pasinetti_lectures_1977, samuelson_understanding_1971, steedman_heterogeneous_2019}. In contrast, the mathematical Marxian literature has tried to provide a more rigorous expression of the labour theory of value, for example by restating the classical problem with the tools of convex analysis, linear algebra and non-negative matrix theory \cite{roemer1981analytical, flaschel2010topics, morishima_marxs_1973}. Among these efforts, one important line attempts to derive an objective vector of values from the physical network; but in a static setting this line usually finds it difficult to satisfy simultaneously the two macro equalities of total price equals total value and total profit equals total surplus value. Another line resolves the contradiction by redefining the relationship between value and money, or by introducing a temporal structure, as in the ``New Interpretation'' and the ``Temporal Single System Interpretation'' \cite{foley_value_1982, dumenil_beyond_1983, dumenil_labor_1989, kliman_temporal_1999, veneziani_dynamics_2005}.
	
	Whether it is the Bortkiewicz tradition, Morishima's iterative solution, the New Interpretation, or the TSSI, almost all mainstream solution attempts implicitly accept the same premise: the labour of different sectors is homogeneous, or can be uniquely converted into homogeneous simple labour by a vector of reduction coefficients given in advance. The classical solutions and Morishima's model take homogeneous labour as the starting point; although the New Interpretation discusses the endogenous determination of the value of labour power, it directly equates the wage share in total value added with the value of labour power, which still implies the assumption of labour homogeneity; the TSSI likewise does not systematically model labour heterogeneity. In other words, in these discussions the problem of ``how complex labour is reduced to simple labour'' has been set aside in the mathematical modelling of the transformation problem.
	
	If the reduction coefficients are treated as uniquely fixed numbers (and often further simplified to all equal one), then the two aggregate equalities can indeed hold only under extremely restrictive conditions. However, once one recognises that the reduction coefficients themselves are also products of social reproduction, they are no longer a single point but a set bounded by objective reproduction conditions. This is precisely the starting point of the present paper in shifting the research perspective. Even if one admits theoretically that a feasible range exists for the reduction coefficients, previous research has not answered a more practical question: in the mass of economic data, how should one systematically find the set of reduction coefficients that is ``selected'' by the social process? In other words, existing work either remains at the level of an existence proof, or directly uses market wages as a proxy for the reduction coefficients. The latter is operationally convenient, but methodologically it falls into the circular reasoning of using an outcome of the price system to illustrate the premises of the value system.
	
	Building on earlier work, this paper advances the solution in two steps. The first step relaxes the reduction coefficients from a fixed point to a feasible region, and proves that within this feasible region the two macro aggregate equalities hold as a normal case rather than a special one; this answers the question of existence. The second step constructs a linear mapping that uses the observable nominal wage structure and the reproduction floor constraint to map wages into reduction coefficients for complex labour; this answers the question of operationality.
	
	The remainder of the paper is organised as follows. Section~2 sets up the input--output model incorporating heterogeneous complex labour and the reproduction floor of labour power. Section~3 introduces the labour reproduction matrix, defines the value feasible region, and proves its geometric properties and existence condition. Section~4 introduces a uniform profit rate, constructs the price feasible region, and proves its monotonic shrinking property. Section~5 reformulates the transformation problem as an intersection problem between a hyperplane and a feasible cone, and proves the universal compatibility theorem. Section~6 proposes the mapping method and establishes a linear isomorphism between the wage space and the reduction-coefficient space. Section~7 presents a three-sector numerical example to illustrate the geometric mechanism. Section~8 carries out an empirical calibration using China's 2017 inter-provincial input--output table. Section~9 concludes.
	
	% ================================================================
	\section{Theoretical Model and Basic Setup}
	% ================================================================
	
	Consider a closed economy with $n\ge 2$ sectors, each producing a single commodity. Production technology is described by intermediate inputs, fixed capital stocks, and direct labour inputs. The consumption basket required for the reproduction of labour power may differ across sectors.
	
	\begin{definition}[Intermediate inputs, capital stock and depreciation matrix]
		Let $A=(a_{ij})\in \R_+^{n\times n}$ be the intermediate input matrix, where $a_{ij}$ denotes the amount of commodity $i$ required to produce one unit of commodity $j$.
		Let $K=(\kappa_{ij})\in \R_+^{n\times n}$ be the capital stock matrix, where $\kappa_{ij}$ denotes the stock of commodity $i$ that must be advanced to produce one unit of commodity $j$.
		Let the depreciation rate of sector $j$ be $\delta_j\in(0,1]$ and write $\hat\delta=\mathrm{diag}(\delta_1,\dots,\delta_n)$. The current depreciation matrix and the composite material input matrix are defined respectively as
		\[
		D:=K\hat\delta ,\qquad 
		\tilde A:=A+D .
		\]
	\end{definition}
	
	The above definition distinguishes ``stock advances'' from ``current transfers'': the fixed capital stock $K$ is the basis on which profit is claimed, whereas only the depreciation part $D$ enters the transfer of value to the product.
	
	\begin{definition}[Labour coefficients and consumption matrix]
		Let $L=\mathrm{diag}(l_1,\dots,l_n)$, where $l_j>0$ is the amount of direct labour of sector $j$ required to produce one unit of commodity $j$.
		Let $B=(\beta_{ij})\in \R_+^{n\times n}$ be the consumption coefficient matrix, where $\beta_{ij}$ denotes the amount of commodity $i$ that a worker of sector $j$ needs to consume per unit of labour performed. The $j$-th column of $B$ represents the minimum consumption basket that sustains the reproduction of labour power in that sector.
	\end{definition}
	
	To guarantee that the model carries realistic economic meaning, the following basic conditions are imposed on the technology and the consumption structure.
	
	\begin{assumption}[Technical conditions]\label{ass:A_L}
		The composite material input matrix $\tilde A$ satisfies:
		\begin{enumerate}[label=(\roman*)]
			\item $\tilde A\ge 0$;
			\item $\tilde A$ is irreducible;
			\item $\lambda_{\max}(\tilde A)<1$.
		\end{enumerate}
		Moreover, for all $j=1,\dots,n$, $l_j>0$.
	\end{assumption}
	
	Assumption~\ref{ass:A_L} implies that after compensating for raw material consumption and fixed capital depreciation the economy still possesses the technical possibility to reproduce itself. Irreducibility reflects the ubiquitous direct or indirect input linkages among sectors in a modern economy.
	
	\begin{assumption}[Consumption network condition]\label{ass:B}
		The consumption matrix $B\ge 0$ has no zero column; i.e.\ for each $j$ there exists at least one $i$ such that the element $\beta_{ij}>0$.
	\end{assumption}
	
	This assumption only requires that each type of worker corresponds to a non-trivial reproduction basket; it does not require that they directly consume all products. Because of the indirect linkages in the input--output network, these consumption needs are propagated throughout the whole production system via the Leontief inverse.
	
	In the case of simple reproduction with zero profit, the price system $\p=(p_1,\dots,p_n)^T>\mathbf 0$ only needs to cover the composite material costs and the wage bill:
	\begin{equation}\label{eq:price_simple}
		\p^T=\p^T\tilde A+\w^T L ,
	\end{equation}
	where $\w=(w_1,\dots,w_n)^T>\mathbf 0$ is the vector of sectoral nominal wages. At the same time, for each type of labour power to be able to reproduce itself continuously, its nominal wage must be at least sufficient to purchase its minimum consumption basket:
	\begin{equation}\label{eq:wage_constraint}
		\w^T\ge \p^T B .
	\end{equation}
	
	% ================================================================
	\section{Labour Reproduction Matrix and Value Feasible Region}
	% ================================================================
	
	From Assumption~\ref{ass:A_L}, $I-\tilde A$ is invertible, and one can solve for the price vector from \eqref{eq:price_simple}:
	\begin{equation}\label{eq:price_solution}
		\p^T=\w^T L(I-\tilde A)^{-1}.
	\end{equation}
	Substituting this into the wage-floor constraint \eqref{eq:wage_constraint} yields
	\[
	\w^T \ge \w^T L(I-\tilde A)^{-1}B .
	\]
	This naturally induces the following matrix.
	
	\begin{definition}[Basic labour reproduction matrix]\label{def:M0}
		Define
		\begin{equation}\label{eq:M0}
			M_0:=L(I-\tilde A)^{-1}B .
		\end{equation}
		The wage constraint under simple reproduction can then be written compactly as
		\begin{equation}\label{eq:wage_M0}
			\w^T(I-M_0)\ge \mathbf 0^T .
		\end{equation}
	\end{definition}
	
	Element $(i,j)$ of $M_0$ represents how much direct labour of sector $i$ is needed, through the entire network of direct and indirect inputs, to sustain one unit of labour capacity of a sector-$j$ worker. It is easy to verify that all elements of $M_0$ are dimensionless pure numbers, and their magnitudes do not depend on the units of measurement of products or on the choice of the price vector; $M_0$ reflects the labour reproduction ratios jointly determined by the technology and the consumption structure.
	
	\begin{lemma}[Properties of $M_0$]\label{lem:M_spectrum}
		Under Assumptions~\ref{ass:A_L} and \ref{ass:B}, $M_0$ is a strictly positive matrix, hence irreducible and aperiodic. Its maximal eigenvalue $\lambda^*:=\lambda_{\max}(M_0)$ is a simple positive real root, and there correspond unique (up to scalar multiples) strictly positive left and right Perron eigenvectors.
	\end{lemma}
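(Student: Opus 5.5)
The plan is to establish strict positivity of $M_0=L(I-\tilde A)^{-1}B$ directly from the three factors, and then read off the spectral claims from the Perron theorem for positive matrices.

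\emph{Step 1: positivity of the Leontief inverse.} By Assumption~\ref{ass:A_L}(iii) the spectral radius of $\tilde A$ is below one. Hence the Neumann series $(I-\tilde A)^{-1}=\sum_{k\ge 0}\tilde A^k$ converges, and it is entrywise nonnegative since $\tilde A\ge 0$. Irreducibility of $\tilde A$ (Assumption~\ref{ass:A_L}(ii)) means that for every pair $(i,j)$ there is some $k=k(i,j)$ with $(\tilde A^k)_{ij}>0$; for $i=j$ one may take $k=0$, using the identity term. Therefore $(I-\tilde A)^{-1}\gg 0$. This is the only step that requires care, and the main obstacle is simply invoking the graph-theoretic characterisation of irreducibility correctly, including the diagonal entries.

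\emph{Step 2: positivity of the product.} Write $C:=(I-\tilde A)^{-1}B$. For any column $j$, Assumption~\ref{ass:B} supplies an index $r$ with $\beta_{rj}>0$. Then
\[
C_{ij}=\sum_{s}c_{is}\beta_{sj}\ge (I-\tilde A)^{-1}_{ir}\beta_{rj}>0 \quad\text{for every } i,
\]
so $C\gg 0$. Left-multiplying by the diagonal matrix $L$ with $l_i>0$ scales row $i$ by a positive number, which gives $M_0\gg 0$.

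\emph{Step 3: spectral consequences.} A strictly positive matrix is irreducible, since its directed graph is complete. It is also aperiodic (primitive), because every diagonal entry is positive, so every vertex carries a loop of length one. Perron's theorem for positive matrices then yields the remaining claims. The spectral radius $\lambda^*>0$ is an eigenvalue and a simple root of the characteristic polynomial; it strictly dominates all other eigenvalues in modulus. Its left and right eigenspaces are one-dimensional and are spanned by strictly positive vectors. Uniqueness up to scaling follows from simplicity. The same result also shows that no other nonnegative eigenvector exists, so the Perron vectors are the only positive eigenvectors, again up to scalar multiples.
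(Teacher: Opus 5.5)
Your proposal is correct and follows essentially the same route as the paper's proof. The Neumann series and irreducibility give $(I-\tilde A)^{-1}>0$; the no-zero-column condition on $B$ together with $l_j>0$ gives $M_0>0$; and Perron's theorem supplies the spectral claims. The only differences are cosmetic: you group the product as $L\bigl[(I-\tilde A)^{-1}B\bigr]$ rather than $\bigl[L(I-\tilde A)^{-1}\bigr]B$, and you state the $k=0$ case for diagonal entries explicitly.
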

	
	Lemma~\ref{lem:M_spectrum} indicates that in a modern economy with a highly developed division of labour, the reproduction of any type of labour is linked to other labour types through the input chain. The Perron eigenvector of $M_0$ can therefore be understood as a ``benchmark direction'' endogenously determined by the objective reproduction network.
	
	Now introduce complex labour. Let $\cvec=(c_1,\dots,c_n)^T>\mathbf 0$ be the vector of reduction coefficients that convert complex labour into simple labour, where $c_j$ indicates how many units of ``simple labour'' one unit of direct labour of sector $j$ is equivalent to. Given $\cvec$, the vector of labour values of commodities $\mathbf v=(v_1,\dots,v_n)^T$ is determined by
	\begin{equation}\label{eq:value_equation}
		\mathbf v^T=\mathbf v^T\tilde A+\cvec^T L .
	\end{equation}
	Since $I-\tilde A$ is invertible,
	\begin{equation}\label{eq:value_solution}
		\mathbf v^T=\cvec^T L(I-\tilde A)^{-1}.
	\end{equation}
	The value of labour power in sector $j$, i.e.\ the labour value of the consumption basket required to sustain one unit of its labour capacity, is
	\[
	\sigma_j = \mathbf v^T\mathbf b_j = [\cvec^T M_0]_j .
	\]
	Hence, for sector $j$ the difference between the new value created per unit of labour and the value of its labour power defines the sector's objective rate of exploitation:
	\[
	e_j(\cvec)=\frac{c_j-[\cvec^T M_0]_j}{[\cvec^T M_0]_j}.
	\]
	For social reproduction to be sustainable, the new value created by each type of worker must be at least as large as the value of its labour power, i.e.\
	\[
	c_j\ge [\cvec^T M_0]_j,\qquad j=1,\dots,n,
	\]
	which can be written in vector form as
	\[
	\cvec^T(I-M_0)\ge \mathbf 0^T .
	\]
	
	\begin{definition}[Value feasible cone and normalised value feasible region]\label{def:value_cone}
		Given $(\tilde A,L,B)$, define the value feasible cone
		\begin{equation}\label{eq:value_cone}
			\widetilde{\Theta}^{\mathrm{val}}:=\left\{\cvec\in\R_+^n\setminus\{\mathbf 0\}: \cvec^T(I-M_0)\ge \mathbf 0^T\right\}.
		\end{equation}
		Its strict interior (where all inequalities are strict) is denoted by $\widetilde{\Theta}^{\circ}$.
		Taking sector~1's labour as the num\'eraire, the normalised section (value feasible region) on the hyperplane $c_1=1$ is defined as
		\begin{equation}\label{eq:value_slice}
			\Theta^{\mathrm{val}}:=\widetilde{\Theta}^{\mathrm{val}}\cap\{\cvec\in\R_+^n: c_1=1\},
			\qquad
			\Theta^{\circ}:=\widetilde{\Theta}^{\circ}\cap\{\cvec\in\R_+^n: c_1=1\}.
		\end{equation}
	\end{definition}
	
	$\widetilde{\Theta}^{\mathrm{val}}$ contains all the structures of complex labour reduction that do not violate the reproduction floor of labour power. If a reduction vector falls outside the cone, then there exists at least one sector whose workers create less new value than the value of their labour power, making the corresponding reproduction structure unsustainable. The normalised section simply removes one degree of freedom and facilitates the visualisation of three-sector numerical examples.
	
	Whether the value feasible region exists depends on whether the economy still yields a surplus after compensating for composite material inputs and labour-power reproduction. This condition can be judged strictly by the spectral radius; it is a corollary of the Fundamental Marxian Theorem.
	
	\begin{theorem}[Existence of the value feasible region]\label{thm:existence}
		Under Assumptions~\ref{ass:A_L} and~\ref{ass:B}, the following statements are equivalent:
		\begin{enumerate}[label=(\roman*)]
			\item the strictly value feasible cone is non-empty, i.e.\ $\widetilde{\Theta}^{\circ}\neq\emptyset$;
			\item $\lambda_{\max}(M_0)<1$;
			\item $\lambda_{\max}(\tilde A+BL)<1$.
		\end{enumerate}
	\end{theorem}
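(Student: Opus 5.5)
The plan is to prove (i)$\Leftrightarrow$(ii) by Perron--Frobenius applied to $M_0$, using Lemma~\ref{lem:M_spectrum}, and then (ii)$\Leftrightarrow$(iii) by a standard similarity/Neumann-series argument relating $M_0=L(I-\tilde A)^{-1}B$ to $\tilde A+BL$.

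\emph{(ii)$\Rightarrow$(i).} Let $\mathbf y^T>\mathbf 0^T$ be the left Perron eigenvector of $M_0$ from Lemma~\ref{lem:M_spectrum}. Set $\cvec:=\mathbf y$. Then
\[
\cvec^T(I-M_0)=(1-\lambda^*)\mathbf y^T>\mathbf 0^T
\]
whenever $\lambda^*<1$, so $\cvec\in\widetilde\Theta^\circ$.

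\emph{(i)$\Rightarrow$(ii).} Take $\cvec\in\widetilde\Theta^\circ$, so $\cvec\ge\mathbf 0$, $\cvec\neq\mathbf 0$ and $\cvec^T M_0<\cvec^T$ componentwise. Since $M_0$ is strictly positive, $\cvec^TM_0>\mathbf 0^T$, and hence $\cvec>\mathbf 0$. Pair the inequality with the strictly positive right Perron vector $\mathbf x$:
\[
\lambda^*\cvec^T\mathbf x=\cvec^TM_0\mathbf x<\cvec^T\mathbf x .
\]
The inequality is strict because $\mathbf x>\mathbf 0$ and at least one component is strict (in fact all are). As $\cvec^T\mathbf x>0$, this gives $\lambda^*<1$.

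\emph{(ii)$\Leftrightarrow$(iii).} Write $N:=(I-\tilde A)^{-1}\ge 0$, with $N>0$ by irreducibility, so that $M_0=LNB$. Consider $G:=NBL=(I-\tilde A)^{-1}BL$. The nonzero spectra of $LNB$ and $NBL$ coincide (the $XY$ versus $YX$ identity), so $\rho(M_0)=\rho(G)$. It then suffices to show $\rho(G)<1\iff\rho(\tilde A+BL)<1$. This is the standard splitting lemma for nonnegative matrices: for $T=\tilde A$ with $\rho(\tilde A)<1$ and $R=BL\ge0$,
\[
\rho(T+R)<1\iff\rho\bigl((I-T)^{-1}R\bigr)<1 .
\]
\begin{itemize}
\item[$(\Rightarrow)$] If $\rho(\tilde A+BL)<1$, take the left Perron vector $\mathbf u^T>0$ of the irreducible matrix $\tilde A+BL$ with root $\mu<1$. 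Then $\mathbf u^T BL=\mathbf u^T(\mu I-\tilde A)\le\mathbf u^T(I-\tilde A)$, with strict inequality in every component since $\mu<1$ and $\mathbf u>0$. Multiplying on the right by $N\ge0$, which has no zero column, gives $\mathbf u^T(I-\tilde A)G<\mathbf u^T(I-\tilde A)$ componentwise. Note that $\mathbf z^T:=\mathbf u^T(I-\tilde A)=\mathbf u^TBL+(1-\mu)\mathbf u^T>0$. The same pairing argument as above then yields $\rho(G)<1$.
\item[$(\Leftarrow)$] If $\rho(G)<1$, reverse the construction. Take $\mathbf c>0$ with $\mathbf c^TM_0<\mathbf c^T$, available from (i). Define $\mathbf v^T=\mathbf c^TLN>0$, the value vector. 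Then
\[
\mathbf v^T(\tilde A+BL)=\mathbf v^T-\mathbf c^TL+\mathbf c^TM_0L=\mathbf v^T-(\mathbf c^T-\mathbf c^TM_0)L<\mathbf v^T ,
\]
using $L$ diagonal positive. A positive vector that is strictly subinvariant for the nonnegative matrix $\tilde A+BL$ forces $\rho(\tilde A+BL)<1$, by the same pairing with its right Perron vector.
\end{itemize}
This last direction in fact gives (i)$\Rightarrow$(iii) directly. Similarly, $(\Rightarrow)$ can be phrased as (iii)$\Rightarrow$(i) via $\mathbf c^T$ proportional to $\mathbf z^T$ pushed through $L^{-1}$, avoiding $G$ entirely.

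The main obstacle is the bookkeeping in (ii)$\Leftrightarrow$(iii): making sure the strict inequalities survive multiplication by $N$ and $L$, and that positivity of the test vectors holds. I would route everything through the single principle that a positive strictly subinvariant row vector of a nonnegative matrix implies spectral radius below one, and the converse via the Perron vector. That principle handles all three equivalences uniformly.
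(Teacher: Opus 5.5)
Your proof is correct. For (i)$\Leftrightarrow$(ii) it is essentially the paper's argument. The paper bounds $\lambda_{\max}(M_0)$ by Collatz--Wielandt with $\alpha=\max_j[\cvec^TM_0]_j/c_j<1$, whereas you pair $\cvec^TM_0<\cvec^T$ with the right Perron vector; these are the same idea. You also justify $\cvec>\mathbf 0$ from $\cvec\ge\mathbf 0$, $\cvec\neq\mathbf 0$ and $M_0>0$, which the paper only asserts.

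For (ii)$\Leftrightarrow$(iii), both routes start from the shared nonzero spectrum of $L(I-\tilde A)^{-1}B$ and $(I-\tilde A)^{-1}BL$, and then diverge.
\begin{itemize}
\item The paper takes the positive eigenvector $\mathbf u$ of $(I-\tilde A)^{-1}BL$ and rewrites the eigen-equation as $(\tilde A+\lambda^{-1}BL)\mathbf u=\mathbf u$, which gives $\lambda_{\max}(\tilde A+\lambda^{-1}BL)=1$. It then concludes from strict monotonicity of $t\mapsto\lambda_{\max}(\tilde A+tBL)$.
\item You instead transport strictly subinvariant positive vectors between the two systems: $\mathbf z^T=\mathbf u^T(I-\tilde A)$ in one direction, and the value vector $\mathbf v^T=\cvec^TL(I-\tilde A)^{-1}$ in the other. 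Each case is then closed by the single pairing principle.
\end{itemize}
Your route is more elementary. Beyond the existence of Perron vectors it needs only the pairing inequality, not the strict monotonicity of the spectral radius in the entries of an irreducible matrix. The (i)$\Rightarrow$(iii) leg also has a nice economic reading: the value vector itself certifies the physical surplus.

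Your closing remark checks out. With $\cvec^T=\mathbf u^T(I-\tilde A)L^{-1}$ one gets $\cvec^TM_0=\mathbf u^TB$ and $\cvec^T(I-M_0)=(1-\mu)\mathbf u^TL^{-1}>\mathbf 0^T$. So (iii)$\Rightarrow$(i) needs no detour through $G$.

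What the paper's parametric argument buys in return is a quantitative statement. The identity $\lambda_{\max}\bigl(\tilde A+(\lambda^*)^{-1}BL\bigr)=1$ identifies $\lambda^*$ as the reciprocal of the uniform scaling of the consumption baskets that exactly exhausts the surplus. Your argument gives only the threshold equivalence.

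One wording slip, not a gap. In your $(\Rightarrow)$ step, $\mathbf z^TG<\mathbf z^T$ does not come from right-multiplying by $N$. That operation would give $\mathbf u^TBLN<\mathbf u^T$, a statement about the similar matrix $BLN$. The inequality holds simply because $\mathbf z^TG=\mathbf u^T(I-\tilde A)(I-\tilde A)^{-1}BL=\mathbf u^TBL$, which you had already shown to be strictly below $\mathbf z^T$.
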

	
	Theorem~\ref{thm:existence} shows that the feasibility of reducing complex labour is not a matter of subjective judgement, but is objectively determined by the observable physical reproduction network: as long as society can still leave a surplus after covering raw materials, depreciation, and workers' subsistence, a strictly positive value feasible region necessarily exists; conversely, if the surplus condition fails, no reduction structure that attempts to sustain a positive surplus value can be viable.
	
	When the condition of Theorem~\ref{thm:existence} holds, the value feasible region is not only non-empty but also possesses a well-behaved geometric structure.
	
	\begin{proposition}[The value feasible region is a compact convex set]\label{prop:convex}
		If $\lambda_{\max}(M_0)<1$, then $\Theta^{\mathrm{val}}$ is a non-empty bounded closed convex set on the hyperplane $c_1=1$. Concretely, for each $j=2,\dots,n$ there exist finite constants
		\[
		0<\underline c_j\le \overline c_j<\infty
		\]
		such that for all $\cvec\in\Theta^{\mathrm{val}}$, $\underline c_j\le c_j\le \overline c_j$.
	\end{proposition}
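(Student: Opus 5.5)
Non-emptiness, closedness and convexity come straight from the definitions and Theorem~\ref{thm:existence}; all the work is in the uniform bounds on $c_j$, which come from positivity of $M_0$ (Lemma~\ref{lem:M_spectrum}).

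\emph{Non-emptiness.} Since $\lambda_{\max}(M_0)<1$, Theorem~\ref{thm:existence} gives $\widetilde{\Theta}^{\circ}\neq\emptyset$. A concrete point is the left Perron vector $\mathbf y>\mathbf 0$ of $M_0$, because $\mathbf y^T(I-M_0)=(1-\lambda^*)\mathbf y^T>\mathbf 0^T$. The feasible cone is invariant under positive scaling. Rescaling any point of $\widetilde{\Theta}^{\mathrm{val}}$ with $c_1>0$ to $c_1=1$ therefore produces a point of $\Theta^{\mathrm{val}}$.

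\emph{Closedness and convexity.} On the hyperplane $c_1=1$, the condition $\cvec\neq\mathbf 0$ holds automatically. Hence $\Theta^{\mathrm{val}}=\{\cvec\ge\mathbf 0:\ c_1=1,\ \cvec^T(I-M_0)\ge\mathbf 0^T\}$. This is an intersection of finitely many closed half-spaces with an affine hyperplane, so it is a closed convex polyhedron.

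\emph{Bounds.} The key step uses strict positivity of $M_0$, i.e.\ $m_{ij}>0$ for all $i,j$. Fix $\cvec\in\Theta^{\mathrm{val}}$ and $j\ge2$.

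For the lower bound, take the feasibility inequality at coordinate $j$:
\[
c_j\ \ge\ \sum_i c_i m_{ij}\ \ge\ c_1 m_{1j}=m_{1j}.
\]
So $\underline c_j:=m_{1j}>0$ works.

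For the upper bound, take the inequality at coordinate $1$:
\[
1=c_1\ \ge\ \sum_i c_i m_{i1}\ \ge\ c_j m_{j1}.
\]
So $\overline c_j:=1/m_{j1}<\infty$ works. Both constants are independent of $\cvec$.

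Since $\Theta^{\mathrm{val}}$ is non-empty, $\underline c_j\le\overline c_j$ follows automatically; one can also check it directly from $m_{1j}m_{j1}<1$. The sharp constants are then the minimum and maximum of $c_j$ over the compact set, which are attained and lie in $[m_{1j},1/m_{j1}]$. Boundedness together with closedness gives compactness.

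I expect no real obstacle. The only care needed is to use coordinate $1$ for the upper bound (this is where the numéraire normalization enters) and to invoke strict positivity from Lemma~\ref{lem:M_spectrum}, rather than mere irreducibility, for the lower bound. With only irreducibility, the lower bound would need a path argument through powers of $M_0$.
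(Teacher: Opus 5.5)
Your proof is correct and takes essentially the same route as the paper: both bounds come from the coordinate-$1$ and coordinate-$j$ feasibility inequalities after discarding nonnegative terms, using $M_0>\mathbf 0$ from Lemma~\ref{lem:M_spectrum}. The only difference is that the paper keeps the diagonal entries (using $m_{jj}<1$) and so gets the slightly sharper constants $\overline c_j=(1-m_{11})/m_{j1}$ and $\underline c_j=m_{1j}/(1-m_{jj})$, whereas your half-space description for closedness and convexity is, if anything, cleaner than the paper's linear-image argument.
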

	
	Proposition~\ref{prop:convex} tells us that, after choosing a benchmark labour, the relative reduction coefficient of any sector's labour can neither be blown up to infinity nor compressed to zero. Of course society's evaluation of the complexity of different types of labour can vary within a certain range, but that range is always strictly bounded by the objective reproduction structure. In other words, the reduction coefficients, treated as a single unknown in the traditional transformation debate, become here a set with clearly defined boundaries.
	
	Inside the value feasible region there is a special location uniquely determined by the reproduction network.
	
	\begin{theorem}[Equal exploitation point]\label{thm:equilibrium}
		Let $\lambda^*=\lambda_{\max}(M_0)<1$ and let $\mathbf y^*>\mathbf 0$ be the normalised left Perron eigenvector of $M_0$, satisfying
		\[
		\mathbf y^{*T}M_0=\lambda^*\mathbf y^{*T},\qquad y_1^*=1.
		\]
		Then:
		\begin{enumerate}[label=(\roman*)]
			\item $\mathbf y^*\in\Theta^\circ$;
			\item at $\mathbf y^*$ the rates of exploitation are exactly equal across all sectors;
			\item within $\Theta^{\mathrm{val}}$, $\mathbf y^*$ is the unique point at which the exploitation rates are equalised;
			\item the common exploitation rate is $e^*=\dfrac{1-\lambda^*}{\lambda^*}>0$.
		\end{enumerate}
	\end{theorem}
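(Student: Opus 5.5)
The plan is to read everything off the Perron--Frobenius structure of $M_0$ given by Lemma~\ref{lem:M_spectrum}. First, the eigenvector equation gives $\mathbf y^{*T}(I-M_0)=(1-\lambda^*)\mathbf y^{*T}$. Since $\lambda^*<1$ and $\mathbf y^*>\mathbf 0$, every component of this vector is strictly positive. Hence $\mathbf y^*$ lies in $\widetilde\Theta^\circ$. The normalisation $y_1^*=1$ is legitimate because $y_1^*>0$, and it places $\mathbf y^*$ on the slice, which proves (i).

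For (ii) and (iv), I will substitute $\cvec=\mathbf y^*$ into the definition of $e_j$. This gives $[\mathbf y^{*T}M_0]_j=\lambda^* y_j^*$, which is positive because $\lambda^*>0$ and $y^*_j>0$. Therefore
\[
e_j(\mathbf y^*)=\frac{y_j^*-\lambda^* y_j^*}{\lambda^* y_j^*}=\frac{1-\lambda^*}{\lambda^*}
\]
for every $j$, and this common value is positive since $0<\lambda^*<1$.

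Uniqueness (iii) carries the real content. Suppose $\cvec\in\Theta^{\mathrm{val}}$ and all $e_j(\cvec)$ equal a common value $e$. I first check that $e_j$ is well defined on $\Theta^{\mathrm{val}}$. Because $M_0$ is strictly positive and $\cvec\ge\mathbf 0$ with $\cvec\neq\mathbf 0$, the vector $\cvec^TM_0$ is strictly positive. Equal rates then mean $c_j=(1+e)[\cvec^TM_0]_j$ for all $j$, that is,
\[
\cvec^TM_0=\mu\,\cvec^T,\qquad \mu:=\frac{1}{1+e}.
\]
Here $\mu>0$, since $1+e=c_j/[\cvec^TM_0]_j>0$ from $c_j>0$ (which itself follows from $c_j\ge[\cvec^TM_0]_j>0$). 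So $\cvec$ is a nonnegative, nonzero left eigenvector of $M_0$.

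The key step, and the main potential obstacle, is to invoke the Perron--Frobenius fact that for an irreducible (here positive) matrix the only nonnegative eigenvector is the Perron one. I would prove this directly rather than cite it. Pair $\cvec^TM_0=\mu\cvec^T$ with the strictly positive right Perron vector $\mathbf x^*$. This gives
\[
\mu\,\cvec^T\mathbf x^*=\cvec^TM_0\mathbf x^*=\lambda^*\cvec^T\mathbf x^*,
\]
and since $\cvec^T\mathbf x^*>0$, it follows that $\mu=\lambda^*$. Simplicity of $\lambda^*$ (Lemma~\ref{lem:M_spectrum}) then forces $\cvec=t\mathbf y^*$ for some scalar $t$. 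The normalisation $c_1=1=y_1^*$ gives $t=1$, so $\cvec=\mathbf y^*$.

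As a consistency remark, the common rate at any such point is then forced to be $e=(1-\lambda^*)/\lambda^*$, in agreement with (iv). None of this uses the bounds of Proposition~\ref{prop:convex}; only Lemma~\ref{lem:M_spectrum} and the hypothesis $\lambda^*<1$ are needed.
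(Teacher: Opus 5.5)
Your proposal is correct and takes essentially the same route as the paper's proof. The eigenvector identity $\mathbf y^{*T}(I-M_0)=(1-\lambda^*)\mathbf y^{*T}$ gives (i), direct substitution gives (ii) and (iv), and uniqueness comes from rewriting equal exploitation rates as $\cvec^T M_0=(1+e)^{-1}\cvec^T$ and then applying Perron--Frobenius. The only difference is that you make explicit two points the paper leaves implicit: that $\cvec^T M_0>\mathbf 0$, so $e_j$ is well defined on $\Theta^{\mathrm{val}}$, and that pairing with the right Perron vector forces the eigenvalue to be $\lambda^*$ before simplicity is invoked. Both tighten the paper's one-line appeal to the uniqueness of positive eigenvectors.
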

	
	Theorem~\ref{thm:equilibrium} generalises Morishima's transformation model. The equal-exploitation point is not an equilibrium point that is necessarily realised by market mechanisms, nor should it be understood as an empirical approximation of the actual wage structure; it merely points out that among all the structures of complex labour reduction compatible with the reproduction floor of labour power, there exists a unique point that fully equalises the objective rates of exploitation across sectors.
	
	% ================================================================
	\section{Uniform Profit Rate and Price Feasible Region}
	% ================================================================
	
	The previous section discussed the objective boundaries of the value feasible region within the framework of simple reproduction with zero profit. However, real-world price systems contain profit. For simplicity and in keeping with the research tradition of the transformation problem, this section introduces a uniform profit rate and examines how the feasible range of wage structures is additionally constrained by the profit requirement.
	
	Let the uniform profit rate be $r\ge 0$. Following Marx's setting, profit is claimed on the advanced fixed capital stock $K$. The price equation then becomes, instead of \eqref{eq:price_simple},
	\begin{equation}\label{eq:price_profit}
		\p^T = \p^T\tilde A + \w^T L + r\p^T K .
	\end{equation}
	As long as the matrix $I-\tilde A-rK$ is invertible, the price vector can be expressed as a linear mapping of the wage vector:
	\begin{equation}\label{eq:price_profit_solution}
		\p^T = \w^T L(I-\tilde A-rK)^{-1}.
	\end{equation}
	Pure technical conditions determine an upper bound $r_A$ for the profit rate, which satisfies
	\[
	\lambda_{\max}(\tilde A + r_A K) = 1 .
	\]
	For $r\in[0,r_A)$ the matrix $I-\tilde A-rK$ remains invertible, so that the price system is uniquely determined once the wage structure is given.
	
	Substituting \eqref{eq:price_profit_solution} into the wage-floor constraint $\w^T\ge \p^T B$ gives
	\[
	\w^T \ge \w^T L(I-\tilde A-rK)^{-1}B .
	\]
	This naturally leads to a labour reproduction matrix parameterised by the profit rate.
	
	\begin{definition}[Parameterised labour reproduction matrix]\label{def:M_r}
		For any $r\in[0,r_A)$, define
		\begin{equation}\label{eq:M_r}
			M(r) := L(I-\tilde A-rK)^{-1}B .
		\end{equation}
		The wage reproduction constraint can then be written as
		\begin{equation}\label{eq:wage_Mr}
			\w^T(I-M(r)) \ge \mathbf 0^T .
		\end{equation}
	\end{definition}
	
	Completely analogous to $M_0$, the element of $M(r)$ represents, under a price system with uniform profit rate $r$, the amount of direct labour of each sector that the entire economic network must expend in order to sustain one unit of labour capacity of a sector-$j$ worker. As $r$ rises, the nominal wage payment required to sustain the same consumption basket systematically increases.
	
	\begin{proposition}[Monotonicity and spectral properties of $M(r)$]\label{prop:M_r}
		Under Assumptions~\ref{ass:A_L}, \ref{ass:B} and $K\neq 0$, for any $r\in[0,r_A)$:
		\begin{enumerate}[label=(\roman*)]
			\item $M(r)$ is a strictly positive matrix and is continuous in $r$;
			\item every element of $M(r)$ is strictly increasing in $r$;
			\item the spectral radius $\lambda_{\max}(M(r))$ is strictly increasing in $r$, and as $r\uparrow r_A$, $\lambda_{\max}(M(r))\to +\infty$.
		\end{enumerate}
	\end{proposition}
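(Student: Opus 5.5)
The plan is to work with the resolvent $R(r):=(I-\tilde A-rK)^{-1}$ and write $M(r)=L\,R(r)\,B$.

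\textbf{Step 1: positivity and continuity.} For $r\in[0,r_A)$ we have $\lambda_{\max}(\tilde A+rK)<1$. This follows from the definition of $r_A$ together with the monotonicity of the Perron root of the nonnegative irreducible matrix $\tilde A+rK$ in $r$. Hence the Neumann series gives
\[
R(r)=\sum_{k\ge 0}(\tilde A+rK)^k .
\]
Irreducibility of $\tilde A$ (Assumption~\ref{ass:A_L}) implies $\tilde A+rK$ is irreducible, so $R(r)\gg 0$ entrywise. Since $L$ is diagonal with positive entries and $B\ge 0$ has no zero column (Assumption~\ref{ass:B}), each entry
\[
[M(r)]_{ij}=l_i\sum_k R_{ik}(r)\beta_{kj}
\]
is strictly positive. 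This is the same argument as Lemma~\ref{lem:M_spectrum}. Continuity holds because matrix inversion is continuous on invertible matrices and $I-\tilde A-rK$ is affine in $r$.

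\textbf{Step 2: strict entrywise monotonicity.} Differentiate using $R'(r)=R(r)KR(r)$, which gives
\[
M'(r)=L\,R(r)\,K\,R(r)\,B .
\]
Because $R(r)\gg 0$ and $K\ge 0$, $K\neq 0$, the product $R(r)KR(r)$ is strictly positive. Indeed, for a nonzero $\kappa_{ab}$, every entry $(i,j)$ of $R K R$ is at least $R_{ia}\kappa_{ab}R_{bj}>0$. Multiplying by the positive diagonal $L$ on the left and by the column-nonzero $B$ on the right keeps every entry strictly positive. So $M'(r)\gg 0$, and each entry of $M(r)$ is strictly increasing. If differentiation is to be avoided, the resolvent identity
\[
R(r_2)-R(r_1)=(r_2-r_1)R(r_1)KR(r_2)
\]
gives the same conclusion directly.

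\textbf{Step 3: the spectral radius.} Strict increase follows from Step 2 and the Perron--Frobenius fact that for irreducible nonnegative matrices $0\le X\le Y$ with $X\neq Y$ one has $\lambda_{\max}(X)<\lambda_{\max}(Y)$. Here $M(r_1)\ll M(r_2)$, so this applies immediately.

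The main obstacle is the blow-up as $r\uparrow r_A$. I would proceed as follows.

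\begin{itemize}[label={}]
\item Let $\mathbf x(r)\gg 0$ be the right Perron vector of $\tilde A+rK$, with eigenvalue $\mu(r)\uparrow 1$ (by continuity of the Perron root), normalised so that $\mathbf x(r)\to \mathbf x_A$, the Perron vector at $r_A$.
\item Since $R(r)\mathbf x(r)=\mathbf x(r)/(1-\mu(r))$, the resolvent has a column combination blowing up along a strictly positive direction.
\item To make this entrywise, use $R(r)\ge (1-\mu(r))^{-1}\,\mathbf x(r)\mathbf z(r)^T/(\mathbf z(r)^T\mathbf x(r))$, where $\mathbf z(r)$ is the left Perron vector. This follows because $R(r)=\sum_k(\tilde A+rK)^k$ dominates the spectral-projector part. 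If that bound is awkward to justify, take the cruder route $\lambda_{\max}(R(r))=1/(1-\mu(r))\to\infty$. For a strictly positive $R(r)$ whose minimal entry to maximal entry ratio stays bounded below as $r\uparrow r_A$ (true by continuity, since $R(r)(1-\mu(r))$ converges to the positive rank-one projector), this forces every entry to tend to $+\infty$.
\item Then $M(r)\ge \min_i l_i\cdot\min_{k,\ell}R_{k\ell}(r)\cdot \mathbf 1\,(\mathbf 1^T B)$. The matrix $\mathbf 1(\mathbf 1^TB)$ has positive Perron root $\mathbf 1^TB\mathbf 1>0$, so by monotonicity of the Perron root under entrywise domination, $\lambda_{\max}(M(r))\ge \min_i l_i\,\min_{k,\ell}R_{k\ell}(r)\,\mathbf 1^TB\mathbf 1\to\infty$.
\end{itemize}

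The delicate point I will need to check carefully is the convergence $(1-\mu(r))R(r)\to \mathbf x_A\mathbf z_A^T/(\mathbf z_A^T\mathbf x_A)$. This holds because $1$ is a simple eigenvalue of the irreducible matrix $\tilde A+r_AK$.
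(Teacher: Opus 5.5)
Your proof is correct and follows the paper's route for the positivity and monotonicity parts: you write $M(r)=LR(r)B$, get $R(r)\gg 0$ from the Neumann series and irreducibility of $\tilde A+rK$, obtain strict entrywise monotonicity from $R(r_2)-R(r_1)=(r_2-r_1)R(r_1)KR(r_2)$, and deduce strict growth of $\lambda_{\max}(M(r))$ by Perron--Frobenius comparison.

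For the blow-up, the paper only asserts that every entry of $R(r)$ diverges and that this forces $\lambda_{\max}(M(r))\to+\infty$. Your argument supplies the missing justification. You show $(1-\mu(r))R(r)\to \mathbf x_A\mathbf z_A^T/(\mathbf z_A^T\mathbf x_A)\gg 0$, which is valid because $1$ is a simple, hence isolated, eigenvalue of $\tilde A+r_AK$ even if that matrix is periodic. You then apply the rank-one bound $M(r)\ge(\min_i l_i)(\min_{k,\ell}R_{k\ell}(r))\,\mathbf 1\mathbf 1^TB$.

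One correction: discard the claimed domination $R(r)\ge(1-\mu(r))^{-1}\mathbf x(r)\mathbf z(r)^T/(\mathbf z(r)^T\mathbf x(r))$, because it is false in general. The remainder $R(r)-(1-\mu(r))^{-1}P(r)=R(r)(I-P(r))$, with $P(r)$ the Perron projector, need not be nonnegative. For example, take the $2\times 2$ matrix $X$ with zero diagonal and both off-diagonal entries equal to $a\in(0,1)$. Then $\mu=a$ and $P=\tfrac12\mathbf 1\mathbf 1^T$, and the off-diagonal entry of $(I-X)^{-1}-(1-a)^{-1}P$ equals $a/(1-a^2)-1/(2(1-a))=-1/(2(1+a))<0$.

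The convergence of $(1-\mu(r))R(r)$ to the strictly positive projector already gives $R_{ij}(r)\ge c/(1-\mu(r))$ for some $c>0$ and all $r$ near $r_A$. So the min/max-ratio detour is unnecessary as well.
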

	
	Proposition~\ref{prop:M_r} reveals the positive relationship between the rising profit rate and the reproduction pressure on wages: the higher the profit claimed by capital, the larger the nominal wage required for workers to maintain the same standard of living, and accordingly the narrower the feasible space for the wage structure. The technically maximal profit rate $r_A$ does not mean the economy can actually operate at that upper bound, because before that point the reproduction floor of labour power may already become impossible to satisfy.
	
	\begin{theorem}[Maximal feasible profit rate]\label{thm:r_star}
		If $\lambda_{\max}(M_0)<1$, there exists a unique critical profit rate $r^*\in(0,r_A)$ such that
		\[
		\lambda_{\max}(M(r^*)) = 1 .
		\]
		Moreover:
		\begin{enumerate}[label=(\roman*)]
			\item when $0\le r<r^*$, $\lambda_{\max}(M(r))<1$ and the wage reproduction constraint admits a strictly feasible solution;
			\item when $r=r^*$, the constraint is exactly at the critical boundary;
			\item when $r>r^*$, no feasible solution exists.
		\end{enumerate}
	\end{theorem}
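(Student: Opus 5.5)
The plan is to apply the intermediate value theorem to the scalar function $f(r):=\lambda_{\max}(M(r))$ on $[0,r_A)$, using Proposition~\ref{prop:M_r}. Items (i) and (iii) of that proposition say that $f$ is strictly increasing and that $f(r)\to+\infty$ as $r\uparrow r_A$. By hypothesis $f(0)=\lambda_{\max}(M_0)<1$.

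The first step is continuity of $f$. By Proposition~\ref{prop:M_r}(i), $M(r)$ is continuous in $r$. The spectral radius is a continuous function of the matrix entries, because the eigenvalues of a matrix depend continuously on its entries. Hence $f$ is continuous on $[0,r_A)$.

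Next comes existence. Since $f(0)<1$ and $f(r)\to+\infty$, some $\bar r<r_A$ satisfies $f(\bar r)>1$. The intermediate value theorem on $[0,\bar r]$ then gives $r^*\in(0,\bar r)\subset(0,r_A)$ with $f(r^*)=1$. We have $r^*\neq 0$ because $f(0)<1$. Uniqueness follows from the strict monotonicity of $f$. Strict monotonicity also gives $f(r)<1$ for $r<r^*$ and $f(r)>1$ for $r^*<r<r_A$.

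The feasibility claims are translated through the Perron--Frobenius argument already used in Theorem~\ref{thm:existence}, now applied to the strictly positive matrix $M(r)$ instead of $M_0$.

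\emph{Case $r<r^*$.} Take the left Perron vector $\mathbf y(r)>\mathbf 0$ of $M(r)$. It satisfies $\mathbf y(r)^T(I-M(r))=(1-f(r))\mathbf y(r)^T>\mathbf 0^T$, so $\w=\mathbf y(r)$ is a strictly feasible wage vector.

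\emph{Case $r=r^*$.} The Perron vector satisfies the constraint with equality in every component. Moreover, no $\w\ge\mathbf 0$, $\w\ne\mathbf 0$ satisfies it strictly. To see this, suppose $\w^T(I-M(r^*))>\mathbf 0^T$ and pair it with the right Perron vector $\mathbf x>\mathbf 0$. This gives $0<\w^T(I-M(r^*))\mathbf x=(1-1)\w^T\mathbf x=0$, a contradiction. This is what "critical boundary" means.

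\emph{Case $r>r^*$.} The same pairing gives $0\le \w^T(I-M(r))\mathbf x=(1-f(r))\w^T\mathbf x<0$, since $\w^T\mathbf x>0$ for any nonzero $\w\ge\mathbf 0$. So no feasible $\w$ exists. For $r\ge r_A$, $M(r)$ is not even defined as a nonnegative matrix, so the statement concerns $r\in(r^*,r_A)$.

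The main point of care is the interpretation of "feasible" in items (ii)--(iii). It must mean a nonzero nonnegative wage vector, and at $r^*$ the claim must be that only non-strict solutions exist, namely multiples of the Perron vector. With that reading, each claim reduces to the pairing with the strictly positive right Perron vector. The rest is routine once Proposition~\ref{prop:M_r} is granted. In particular, the divergence $f(r)\to\infty$ near $r_A$ is taken from that proposition rather than reproved.
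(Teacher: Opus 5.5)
Your proof is correct and follows the paper's route. Like the paper, you apply the intermediate value theorem to $f(r)=\lambda_{\max}(M(r))$, using the monotonicity and blow-up from Proposition~\ref{prop:M_r}, and get uniqueness and the three regimes from strict monotonicity; your continuity argument for the spectral radius and the left/right Perron-vector pairing, which turns $f(r)<1$, $=1$, $>1$ into statements about $\w^T(I-M(r))\ge\mathbf 0^T$, fill in steps the paper dismisses as ``directly'' following.
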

	
	Thus $r^*$ is the true profit-rate ceiling compatible with workers' reproduction floor. The gap between $r^*$ and $r_A$ captures the hard constraint imposed by the conditions of labour-power reproduction on the drive for capital valorisation.
	
	Similar to the value feasible cone defined earlier, we first define the primitive feasible cone of wage vectors:
	\[
	\widetilde{\Theta}^{\mathrm{price}}(r) := \left\{ \w\in\R_+^n\setminus\{\mathbf 0\} : \w^T(I-M(r))\ge \mathbf 0^T \right\}.
	\]
	To remove the arbitrariness of the absolute scale of wages, we take the normalised section $w_1=1$ and obtain the feasible region of relative nominal wages.
	
	\begin{definition}[Price feasible region]\label{def:Theta_price}
		For any $r\in[0,r^*]$, define
		\begin{equation}\label{eq:Theta_price}
			\Theta^{\mathrm{price}}(r) := \widetilde{\Theta}^{\mathrm{price}}(r) \cap \left\{ \w\in\R_+^n : w_1=1 \right\}.
		\end{equation}
	\end{definition}
	
	It is important to stress that although $\Theta^{\mathrm{price}}(r)$ and the value feasible region $\Theta^{\mathrm{val}}$ both lie in the space of normalised positive vectors, their economic meanings are completely different: the former consists of relative nominal wage structures, whereas the latter consists of reduction coefficients of complex labour. The geometric comparisons that follow only concern the inclusion and intersection relations between them as sets of positive vectors, and do not identify the two as the same thing.
	
	The compression of the bargaining space of wages caused by a rising profit rate is expressed geometrically as a monotonic shrinking of the price feasible region.
	
	\begin{theorem}[Monotonic shrinking of the price feasible region]\label{thm:duality}
		Under Assumptions~\ref{ass:A_L}, \ref{ass:B} and $K\neq 0$, if $\lambda_{\max}(M_0)<1$, then for any $0\le r_1<r_2\le r^*$,
		\[
		\Theta^{\mathrm{price}}(r_2)\subsetneq \Theta^{\mathrm{price}}(r_1).
		\]
		In particular, when $r=r^*$, $\Theta^{\mathrm{price}}(r^*)$ degenerates to a single point, namely the normalised positive left Perron eigenvector of $M(r^*)$.
	\end{theorem}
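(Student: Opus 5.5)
The plan is to derive both claims from the elementwise monotonicity of $M(r)$ in Proposition~\ref{prop:M_r} together with Perron--Frobenius duality. For the inclusion, fix $0\le r_1<r_2\le r^*$ and take $\w\in\Theta^{\mathrm{price}}(r_2)$, so $\w\ge\mathbf 0$, $\w\neq\mathbf 0$ and $w_1=1$. By Proposition~\ref{prop:M_r}(i)--(ii), $M(r_2)-M(r_1)$ is a strictly positive matrix. Since $\w$ is nonnegative and nonzero, every component of $\w^T(M(r_2)-M(r_1))$ is strictly positive. Hence
\[
\w^T M(r_1) < \w^T M(r_2) \le \w^T
\]
componentwise. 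Thus $\w$ satisfies the $r_1$-constraints with \emph{all inequalities strict}, and $\Theta^{\mathrm{price}}(r_2)\subseteq\Theta^{\mathrm{price}}(r_1)$.

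For properness, it suffices to exhibit a point of $\Theta^{\mathrm{price}}(r_1)$ at which some constraint $[\w^T(I-M(r_1))]_j\ge 0$ binds. By the previous step, such a point cannot lie in $\Theta^{\mathrm{price}}(r_2)$.

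\begin{enumerate}[label=(\alph*)]
\item $\Theta^{\mathrm{price}}(r_1)$ is nonempty. By Theorem~\ref{thm:r_star} and Proposition~\ref{prop:M_r}(iii), $\lambda_{\max}(M(r_1))\le 1$, and the normalised left Perron vector of $M(r_1)$ is feasible.
\item $\Theta^{\mathrm{price}}(r_1)$ is closed in the hyperplane $\{w_1=1\}$, as an intersection of closed half-spaces.
\item $\Theta^{\mathrm{price}}(r_1)$ is not the whole hyperplane, since the hyperplane contains vectors with negative entries.
\end{enumerate}

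Because the hyperplane is connected (here $n\ge2$), a nonempty closed proper subset must have a relative boundary point $\bar\w$. At such a point some defining inequality must be active: either $\bar w_j=0$ for some $j$, or $[\bar\w^T(I-M(r_1))]_j=0$ for some $j$. The first alternative is impossible. Indeed, $\bar\w\ge\mathbf 0$ is nonzero and $M(r_1)$ is strictly positive, so $\bar w_j\ge[\bar\w^T M(r_1)]_j>0$ for every $j$. Hence a reproduction constraint binds at $\bar\w$, and $\bar\w\notin\Theta^{\mathrm{price}}(r_2)$. This boundary argument is the step that needs care. The point is to exclude boundary points arising from the positivity constraints, and strict positivity of $M(r_1)$ does exactly that.

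For the degeneration at $r=r^*$, put $M^*:=M(r^*)$. It is strictly positive and has $\lambda_{\max}(M^*)=1$ by Theorem~\ref{thm:r_star}. Let $\mathbf x>\mathbf 0$ be its right Perron vector and $\mathbf y$ its left Perron vector, normalised so that $y_1=1$. Then $\mathbf y\in\Theta^{\mathrm{price}}(r^*)$, since $\mathbf y^T(I-M^*)=\mathbf 0^T$. Conversely, take any $\w\in\Theta^{\mathrm{price}}(r^*)$ and set $\mathbf u^T:=\w^T(I-M^*)\ge\mathbf 0^T$. Pairing with $\mathbf x$ gives
\[
\mathbf u^T\mathbf x=\w^T\mathbf x-\w^T M^*\mathbf x=\w^T\mathbf x-\w^T\mathbf x=0 .
\]
Since $\mathbf u\ge\mathbf 0$ and $\mathbf x>\mathbf 0$, this forces $\mathbf u=\mathbf 0$, i.e.\ $\w^T M^*=\w^T$. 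Thus $\w$ is a nonnegative left eigenvector of $M^*$ for the eigenvalue $1$. By simplicity of the Perron root (Lemma~\ref{lem:M_spectrum} applied to the positive matrix $M^*$), $\w$ is a scalar multiple of $\mathbf y$. The normalisation $w_1=1$ then gives $\w=\mathbf y$, so $\Theta^{\mathrm{price}}(r^*)=\{\mathbf y\}$.
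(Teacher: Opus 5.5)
Your proposal is correct and follows the paper's proof step for step: the inclusion comes from the strict positivity of $M(r_2)-M(r_1)$, strictness from a point of $\Theta^{\mathrm{price}}(r_1)$ at which a reproduction constraint binds, and the degeneration at $r^*$ from pairing $\w^T(I-M(r^*))\ge\mathbf 0^T$ with the positive right Perron vector. Your connectedness argument, together with the observation that $\bar{w}_j\ge[\bar{\w}^T M(r_1)]_j>0$ rules out active positivity constraints, justifies a step the paper only asserts: that a boundary point of $\Theta^{\mathrm{price}}(r_1)$ must make some reproduction constraint bind.
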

	
	Theorem~\ref{thm:duality} illustrates vividly that the consequence of capital pursuing a higher profit rate is not only a fall in the wage share, but also a narrowing of the range of relative wage structures that society can sustain. As the profit rate approaches $r^*$, the room for negotiation over the wage structure is squeezed to the limit; any deviation from that unique feasible direction would make the reproduction of some workers' labour power unsustainable.
	
	% ================================================================
	\section{Geometric Solution to the Transformation Problem}
	% ================================================================
	
	The preceding two sections characterised the structural feasible regions from the value space and from the price space separately. This section brings together the constraints from these two dimensions, reformulates the transformation problem in geometric language, and provides a necessary and sufficient condition for the two macro aggregate equalities to hold simultaneously.
	
	Given the vector of social gross output $\mathbf x>\mathbf 0$, for any reduction coefficient vector $\cvec\in \widetilde{\Theta}^{\mathrm{val}}$ define total value and total surplus value:
	\begin{align}
		F(\cvec) &:= \cvec^T L(I-\tilde A)^{-1}\mathbf x , \label{eq:value_functional}\\
		S(\cvec) &:= \cvec^T L\mathbf x - \cvec^T M_0 L\mathbf x
		= \cvec^T L(I-\tilde A)^{-1}(I-\tilde A-BL)\mathbf x . \label{eq:total_surplus}
	\end{align}
	
	In the price space, given a uniform profit rate $r\in[0,r^*]$ and some relative wage structure $\w^{rel}\in\Theta^{\mathrm{price}}(r)$, the corresponding total price and total profit are
	\begin{align}
		P(r,\w^{rel}) &:= \w^{rel,T} L(I-\tilde A-rK)^{-1}\mathbf x , \label{eq:P_star}\\
		\Pi(r,\w^{rel}) &:= r\,\w^{rel,T} L(I-\tilde A-rK)^{-1}K\mathbf x . \label{eq:Pi_star}
	\end{align}
	One can then define the macro profit share corresponding to this price system,
	\begin{equation}\label{eq:gamma_rw}
		\gamma(r,\w^{rel}) := \frac{\Pi(r,\w^{rel})}{P(r,\w^{rel})} \in (0,1).
	\end{equation}
	
	Within the framework of feasible regions, the transformation problem demands finding a reduction vector $\cvec$ such that
	\[
	F(\cvec) = P(r,\w^{rel}), \qquad S(\cvec) = \Pi(r,\w^{rel}).
	\]
	These two conditions can be decomposed into a condition that determines only direction and one that determines only scale.
	
	\begin{theorem}[Direction and scale]\label{thm:two_equalities}
		Fix $r\in[0,r^*]$ and $\w^{rel}\in\Theta^{\mathrm{price}}(r)$, and denote
		\[
		P^*:=P(r,\w^{rel}),\quad \Pi^*:=\Pi(r,\w^{rel}),\quad \gamma:=\frac{\Pi^*}{P^*}.
		\]
		Then for any $\cvec\in\widetilde{\Theta}^{\mathrm{val}}$, the two macro aggregate equalities $F(\cvec)=P^*$ and $S(\cvec)=\Pi^*$ hold simultaneously if and only if $\cvec$ satisfies both of the following:
		\begin{align}
			\cvec^T \boldsymbol{\eta}(\gamma) &= 0 , \label{eq:second_hyperplane}\\
			\cvec^T \boldsymbol{\xi} &= P^* , \label{eq:first_hyperplane}
		\end{align}
		where
		\begin{align}
			\boldsymbol{\eta}(\gamma) &:= L(I-\tilde A)^{-1}\Big[(\tilde A+BL)\mathbf x-(1-\gamma)\mathbf x\Big], \label{eq:eta_gamma}\\
			\boldsymbol{\xi} &:= L(I-\tilde A)^{-1}\mathbf x . \label{eq:xi_vector}
		\end{align}
		Condition \eqref{eq:second_hyperplane} only determines the direction of $\cvec$ in the cone (it is homogeneous), while condition \eqref{eq:first_hyperplane} uniquely determines its scale once the direction is given.
	\end{theorem}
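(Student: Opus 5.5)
The plan is to rewrite both target equalities as linear conditions on $\cvec$ and then take the difference. First, by \eqref{eq:value_functional} and \eqref{eq:xi_vector}, $F(\cvec)=\cvec^T L(I-\tilde A)^{-1}\mathbf x=\cvec^T\boldsymbol{\xi}$. So $F(\cvec)=P^*$ is literally \eqref{eq:first_hyperplane}.

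Next I will express $S(\cvec)$ through the same resolvent. The second form in \eqref{eq:total_surplus} gives
\[
S(\cvec)=\cvec^T L(I-\tilde A)^{-1}\big[\mathbf x-(\tilde A+BL)\mathbf x\big].
\]
It is worth checking this against the first form, since it is the only identity actually used. The check reduces to $M_0 L\mathbf x = L(I-\tilde A)^{-1}BL\mathbf x$ and $L\mathbf x=L(I-\tilde A)^{-1}(I-\tilde A)\mathbf x$, both immediate from Definition~\ref{def:M0}. It follows that $F(\cvec)-S(\cvec)=\cvec^T L(I-\tilde A)^{-1}(\tilde A+BL)\mathbf x$. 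Combining with the definition \eqref{eq:eta_gamma} then yields the key identity
\[
\cvec^T\boldsymbol{\eta}(\gamma)=\big(F(\cvec)-S(\cvec)\big)-(1-\gamma)F(\cvec)=\gamma F(\cvec)-S(\cvec).
\]

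For the equivalence, the forward direction is direct. If $F=P^*$ and $S=\Pi^*$, then $\cvec^T\boldsymbol\eta(\gamma)=\gamma P^*-\Pi^*=0$, and $\cvec^T\boldsymbol\xi=P^*$. Conversely, if \eqref{eq:first_hyperplane} holds, then $F(\cvec)=P^*$. Condition \eqref{eq:second_hyperplane} then gives $S(\cvec)=\gamma F(\cvec)=\gamma P^*=\Pi^*$. Here $P^*>0$, so $\gamma$ is well defined; this follows because $\w^{rel}>\mathbf 0$, $L>0$, $(I-\tilde A-rK)^{-1}\ge 0$ with positive diagonal for $r<r_A$, and $\mathbf x>\mathbf 0$.

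For the direction/scale remark, \eqref{eq:second_hyperplane} is homogeneous of degree one, so it is invariant under $\cvec\mapsto t\cvec$ with $t>0$. It therefore constrains only the ray. The cone $\widetilde{\Theta}^{\mathrm{val}}$ is itself invariant under positive scaling. Moreover, $\boldsymbol\xi>\mathbf 0$: $L$ is a positive diagonal matrix and $(I-\tilde A)^{-1}>0$ by irreducibility (Assumption~\ref{ass:A_L}), and $\mathbf x>\mathbf 0$. Hence for any nonzero $\cvec\ge\mathbf 0$ we have $\cvec^T\boldsymbol\xi>0$. On each ray there is then exactly one $t=P^*/(\cvec^T\boldsymbol\xi)$ satisfying \eqref{eq:first_hyperplane}.

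The only step needing care is the algebraic identity for $S$ (matching the two forms in \eqref{eq:total_surplus}). Everything else is linear bookkeeping.
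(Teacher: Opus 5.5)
Your proof is correct and takes essentially the same route as the paper's: you identify $F(\cvec)=\cvec^T\boldsymbol{\xi}$, rewrite $S$ through $L(I-\tilde A)^{-1}$, and observe that once $F(\cvec)=P^*$ holds, $S(\cvec)=\Pi^*$ is equivalent to $S(\cvec)=\gamma F(\cvec)$, i.e.\ $\cvec^T\boldsymbol{\eta}(\gamma)=0$. Your explicit identity $\cvec^T\boldsymbol{\eta}(\gamma)=\gamma F(\cvec)-S(\cvec)$ and your positivity checks on $P^*$ and $\boldsymbol{\xi}$ make rigorous what the paper leaves implicit in ``substituting and rearranging'' and in the claim that the scale factor is uniquely determined.
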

	
	Theorem~\ref{thm:two_equalities} translates the transformation problem into geometric language: the homogeneous hyperplane corresponding to the second equality,
	\[
	H_2(\gamma) := \left\{ \cvec\in\R^n : \cvec^T\boldsymbol{\eta}(\gamma)=0 \right\},
	\]
	selects those reduction directions that make total surplus value proportional to total profit with the correct factor; the inhomogeneous hyperplane corresponding to the first equality,
	\[
	H_1(P^*) := \left\{ \cvec\in\R^n : \cvec^T\boldsymbol{\xi}=P^* \right\},
	\]
	then picks, among those directions, the point at which total value equals total price.
	
	Thus the transformation problem is converted to: given the desired $(P^*,\gamma)$, does $H_2(\gamma)$ have a common intersection line with the value feasible cone $\widetilde{\Theta}^{\mathrm{val}}$, and moreover, does there exist a point on that intersection line that also lies on $H_1(P^*)$? The following proposition shows that once the direction problem is solved, the scale problem can always be resolved automatically by a single proportional rescaling.
	
	\begin{proposition}[Direction problem]\label{prop:ray_scale}
		If there exists a normalised direction $\bar{\cvec}\in H_2(\gamma)\cap\Theta^{\mathrm{val}}$ (or $\Theta^{\circ}$), then let
		\[
		\cvec = \frac{P^*}{F(\bar{\cvec})}\,\bar{\cvec},
		\]
		and we obtain $\cvec\in\widetilde{\Theta}^{\mathrm{val}}$ (or $\widetilde{\Theta}^{\circ}$) which simultaneously satisfies the two macro aggregate equalities. Conversely, any $\cvec\in\widetilde{\Theta}^{\mathrm{val}}$ that satisfies the two equalities, after normalisation, must fall into $H_2(\gamma)\cap\Theta^{\mathrm{val}}$.
	\end{proposition}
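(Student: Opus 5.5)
The plan is to reduce everything to Theorem~\ref{thm:two_equalities} and to use the fact that both feasible cones are invariant under positive scaling. For the forward direction, take $\bar{\cvec}\in H_2(\gamma)\cap\Theta^{\mathrm{val}}$. The first step is to check that $F(\bar{\cvec})>0$, so that the scaling factor $t:=P^*/F(\bar{\cvec})$ is a well-defined positive number. This follows because $\bar{\cvec}\ge\mathbf 0$ with $\bar c_1=1$, and because $L(I-\tilde A)^{-1}\mathbf x>\mathbf 0$: under Assumption~\ref{ass:A_L}, $(I-\tilde A)^{-1}$ is strictly positive since $\tilde A$ is irreducible with spectral radius below one, $L$ has positive diagonal, and $\mathbf x>\mathbf 0$. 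Thus $F(\bar{\cvec})=\bar{\cvec}^T\boldsymbol{\xi}\ge \xi_1>0$. The positivity of $P^*$ comes from the analogous fact that $(I-\tilde A-rK)^{-1}\ge 0$ for $r<r_A$. The case $r=r^*$ also gives $r^*<r_A$ by Theorem~\ref{thm:r_star}, and $\w^{rel}\ge 0$ with $w_1=1$.

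Next, set $\cvec=t\bar{\cvec}$. Since the defining inequalities $\cvec^T(I-M_0)\ge\mathbf 0^T$ (respectively with strict inequality) and $\cvec\ge\mathbf 0$, $\cvec\neq\mathbf 0$ are positively homogeneous, $\cvec$ lies in $\widetilde{\Theta}^{\mathrm{val}}$ (respectively in $\widetilde{\Theta}^{\circ}$). Condition \eqref{eq:second_hyperplane} is linear and homogeneous, so $\cvec^T\boldsymbol{\eta}(\gamma)=t\,\bar{\cvec}^T\boldsymbol{\eta}(\gamma)=0$. Also $\cvec^T\boldsymbol{\xi}=tF(\bar{\cvec})=P^*$, because $F(\cvec)=\cvec^T\boldsymbol{\xi}$ by \eqref{eq:value_functional} and \eqref{eq:xi_vector}. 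Theorem~\ref{thm:two_equalities} then gives $F(\cvec)=P^*$ and $S(\cvec)=\Pi^*$.

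For the converse, take $\cvec\in\widetilde{\Theta}^{\mathrm{val}}$ satisfying both equalities. By Theorem~\ref{thm:two_equalities}, $\cvec^T\boldsymbol{\eta}(\gamma)=0$. The normalisation $\bar{\cvec}=\cvec/c_1$ needs $c_1>0$, and this is the main obstacle, since membership in the cone only gives $\cvec\ge\mathbf 0$, $\cvec\ne\mathbf 0$. I would first show that any $\cvec\in\widetilde{\Theta}^{\mathrm{val}}$ is in fact strictly positive. Because $M_0$ is strictly positive (Lemma~\ref{lem:M_spectrum}) and $\cvec\ge\mathbf 0$ is nonzero, the vector $\cvec^TM_0$ is strictly positive. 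The inequality $\cvec^T\ge\cvec^TM_0$ then forces every $c_j>0$, and in particular $c_1>0$. Dividing by $c_1>0$ preserves both the cone inequalities and the homogeneous hyperplane condition, so $\bar{\cvec}\in H_2(\gamma)\cap\Theta^{\mathrm{val}}$. The same argument with strict inequalities handles the $\Theta^{\circ}$ version.
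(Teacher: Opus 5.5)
Your proof is correct and takes essentially the same route as the paper. For the forward direction you rescale $\bar{\cvec}$ by $P^*/F(\bar{\cvec})$, using the cone property and the homogeneous direction condition from Theorem~\ref{thm:two_equalities}; for the converse you normalise by $c_1$. You also supply details the paper only asserts: $F(\bar{\cvec})>0$, $P^*>0$, and $c_1>0$, the last obtained from $\cvec^T\ge\cvec^TM_0>\mathbf 0^T$ via the strict positivity of $M_0$.
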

	
	At this point the problem is entirely focused on whether the direction hyperplane $H_2(\gamma)$ intersects the normalised value feasible region $\Theta^{\mathrm{val}}$. To provide a criterion for this intersection, introduce the extended reproduction operator
	\[
	\hat A := \tilde A+BL,\qquad \hat S := L(I-\hat A)^{-1},
	\]
	and the sectoral critical profit shares endogenously determined by the physical reproduction network:
	\[
	\gamma_i^{\mathrm{crit}} := 1 - \frac{[\hat S\hat A\mathbf x]_i}{[\hat S\mathbf x]_i},\qquad i=1,\dots,n.
	\]
	
	\begin{theorem}[Compatibility interval]\label{thm:universal_compatibility}
		Suppose the physical surplus condition holds, i.e.\ $\lambda_{\max}(\tilde A+BL)<1$. Fix $r\in[0,r^*]$ and $\w^{rel}\in\Theta^{\mathrm{price}}(r)$, and let $\gamma=\gamma(r,\w^{rel})$. Then:
		\begin{enumerate}[label=(\roman*)]
			\item If $\gamma\in(\gamma_{\min}^{\mathrm{crit}},\gamma_{\max}^{\mathrm{crit}})$, there exists $\cvec\in\widetilde{\Theta}^{\circ}$ satisfying the two macro aggregate equalities simultaneously;
			\item If $\gamma$ equals $\gamma_{\min}^{\mathrm{crit}}$ or $\gamma_{\max}^{\mathrm{crit}}$, there exists $\cvec\in\widetilde{\Theta}^{\mathrm{val}}$ satisfying the two equalities (boundary solution);
			\item If $\gamma\notin[\gamma_{\min}^{\mathrm{crit}},\gamma_{\max}^{\mathrm{crit}}]$, no $\cvec\in\widetilde{\Theta}^{\mathrm{val}}$ can satisfy the two equalities simultaneously.
		\end{enumerate}
		Here $\gamma_{\min}^{\mathrm{crit}}=\min_i\gamma_i^{\mathrm{crit}}$ and $\gamma_{\max}^{\mathrm{crit}}=\max_i\gamma_i^{\mathrm{crit}}$.
	\end{theorem}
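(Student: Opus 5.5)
The plan is to linearise the value feasible cone by changing variables to the vector of \emph{surplus value per unit of labour}. For $\cvec\in\widetilde{\Theta}^{\mathrm{val}}$ set $\mathbf u^T:=\cvec^T(I-M_0)$. Under the physical surplus condition, Theorem~\ref{thm:existence} gives $\lambda_{\max}(M_0)<1$. Since $M_0>0$ by Lemma~\ref{lem:M_spectrum}, the Neumann series shows $(I-M_0)^{-1}=\sum_k M_0^k>0$. Hence $\cvec^T=\mathbf u^T(I-M_0)^{-1}$, and the map $\cvec\mapsto\mathbf u$ is a linear bijection with three properties:
\begin{itemize}
\item it sends $\widetilde{\Theta}^{\mathrm{val}}$ onto $\R^n_+\setminus\{\mathbf 0\}$;
\item it sends $\widetilde{\Theta}^{\circ}$ onto the strictly positive vectors;
\item the constraint $\cvec\ge\mathbf 0$ holds automatically, because $\mathbf u\ge\mathbf 0$ and $\mathbf u\neq\mathbf 0$ give $\cvec>\mathbf 0$.
\end{itemize}
By Theorem~\ref{thm:two_equalities} and Proposition~\ref{prop:ray_scale}, the two equalities reduce to the homogeneous condition $S(\cvec)=\gamma F(\cvec)$ together with a positive rescaling. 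The rescaling is always possible because $F(\cvec)>0$ on the cone. So the whole question becomes for which $\gamma$ the equation $S=\gamma F$ has a solution $\mathbf u\ge\mathbf 0$, $\mathbf u\ne\mathbf 0$ (respectively $\mathbf u>\mathbf 0$).

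The key step, and the only real computation, is to rewrite $F$ and $S$ in the new variable. For total surplus value this is immediate: $S(\cvec)=\cvec^T(I-M_0)L\mathbf x=\mathbf u^TL\mathbf x$. For total value, $F(\cvec)=\mathbf u^T(I-M_0)^{-1}L(I-\tilde A)^{-1}\mathbf x$. Write $R:=(I-\tilde A)^{-1}$, so $M_0=LRB$. I will use the push-through identity $(I-LRB)^{-1}LR=LR(I-BLR)^{-1}$, followed by $R(I-BLR)^{-1}=\big((I-BLR)(I-\tilde A)\big)^{-1}=(I-\tilde A-BL)^{-1}$. Together these give $(I-M_0)^{-1}L(I-\tilde A)^{-1}=L(I-\hat A)^{-1}=\hat S$, and hence $F=\mathbf u^T\hat S\mathbf x$. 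The inverse $(I-\hat A)^{-1}$ exists and is nonnegative because $\lambda_{\max}(\hat A)<1$. I expect verifying this identity cleanly to be the main obstacle, and I will check it directly by multiplying out.

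Next, from $\hat S(I-\hat A)=L$ we get $L\mathbf x=\hat S\mathbf x-\hat S\hat A\mathbf x$. Set $s_i:=[\hat S\mathbf x]_i$. Each $s_i>0$, because $\hat S=L\sum_k\hat A^k\ge L$ has positive diagonal and $\mathbf x>\mathbf 0$. Substituting,
\[
S-\gamma F=\mathbf u^T\big[(1-\gamma)\hat S\mathbf x-\hat S\hat A\mathbf x\big]=\sum_{i=1}^n u_i\,s_i\,(\gamma_i^{\mathrm{crit}}-\gamma).
\]
The three cases now follow from elementary sign considerations on a weighted sum with positive weights $s_i$.
\begin{enumerate}[label=(\roman*)]
\item If $\gamma_{\min}^{\mathrm{crit}}<\gamma<\gamma_{\max}^{\mathrm{crit}}$, pick an index $i$ with $\gamma_i^{\mathrm{crit}}<\gamma$ and an index $k$ with $\gamma_k^{\mathrm{crit}}>\gamma$. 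Set all other $u_j=\varepsilon>0$ small, then adjust $u_k$ or $u_i$ so that the sum vanishes. This yields a solution $\mathbf u>\mathbf 0$, hence $\cvec\in\widetilde{\Theta}^{\circ}$.
\item If $\gamma=\gamma_i^{\mathrm{crit}}$ for an extremal index $i$, take $\mathbf u=\mathbf e_i$. This gives a boundary point of $\widetilde{\Theta}^{\mathrm{val}}$.
\item If $\gamma<\gamma_{\min}^{\mathrm{crit}}$, every term $u_is_i(\gamma_i^{\mathrm{crit}}-\gamma)$ is nonnegative, and at least one is strictly positive since $\mathbf u\ne\mathbf 0$. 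So the sum is nonzero. The case $\gamma>\gamma_{\max}^{\mathrm{crit}}$ is symmetric.
\end{enumerate}
Finally, in cases (i) and (ii), Proposition~\ref{prop:ray_scale} rescales the solution to meet $F=P^*$; the rescaling is positive, so the solution stays in the interior or on the boundary respectively.
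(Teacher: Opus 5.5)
Your proposal is correct and follows essentially the same route as the paper. The paper substitutes $\mathbf q^T=\cvec^T(I-M_0)$, uses the push-through identity (its Auxiliary Lemma A.1) to get $(I-M_0)^{-1}L(I-\tilde A)^{-1}=\hat S$, and reduces the direction condition to the sign pattern of $z_i(\gamma)=(\gamma-\gamma_i^{\mathrm{crit}})[\hat S\mathbf x]_i$, which is your weighted sum with the opposite sign; your explicit constructions for cases (i)--(iii) spell out what the paper only asserts via ``connectedness and continuity'' and ``the endpoint cases admit boundary solutions, and outside the interval no solution exists.''
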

	
	Theorem~\ref{thm:universal_compatibility} shows that as long as the macro profit share falls within the compatibility interval endogenously determined by the objective reproduction network, there necessarily exists a set of strictly positive reduction coefficients of complex labour that makes the labour value system and the production price system consistent at the macro level.
	
	However, the above existence result leaves an open question: in a real economy, starting from the observed data, how should we systematically find the set of reduction coefficients that is ``selected'' by the social process? Theorem~\ref{thm:universal_compatibility} tells us that such coefficients do exist, but it does not provide any operational method for finding them. The next section answers this question directly.
	
	% ================================================================
	\section{Linear Transformation from Wages to Reduction Coefficients}
	% ================================================================
	
	The previous sections have proved that within the value feasible region $\Theta^{\mathrm{val}}$ and the price feasible region $\Theta^{\mathrm{price}}(r)$ there each exists a set of normalised vectors satisfying the reproduction constraints. A natural question arises: is there a systematic correspondence between these two feasible regions? If one can establish a mapping from the observable nominal wage to the unobservable reduction coefficients, then the reduction of complex labour is no longer a purely speculative problem, but can enter the empirical domain.
	
	The two feasible regions are defined respectively by $\cvec^T(I-M_0)\ge\mathbf 0^T$ and $\w^T(I-M(r))\ge\mathbf 0^T$. Their constraint structures are highly symmetric: $M_0$ and $M(r)$ are both positive matrices based on the same technology and consumption network, the only difference being that the latter includes the price markup of the profit rate. This symmetry suggests that there may exist a correspondence between vectors in the two spaces, mediated by the reproduction constraints themselves. From the numerical example in Section~7 one can observe that the price feasible region and the value feasible region have similar shapes, and a linear mapping may exist that sends the price feasible region into the value feasible region.
	
	\begin{definition}[Proportional sectoral surpluses]\label{def:proportional}
		Given a profit rate $r$, the sectoral surplus vectors of a wage vector $\w$ and a reduction coefficient vector $\cvec$ satisfy
		\begin{equation}\label{eq:proportional}
			\w^T(I-M(r)) \;\propto\; \cvec^T(I-M_0),
		\end{equation}
		i.e.\ the two are proportional. In symbols, there exists some positive number $\theta>0$ such that $\w^T(I-M(r)) = \theta\,\cvec^T(I-M_0)$. As will be seen below, $\theta$ cancels out automatically during normalisation.
	\end{definition}
	
	It must be emphasised that Assumption~\ref{def:proportional} is not a theorem deduced from some more fundamental economic behavioural hypothesis (such as profit-rate equalisation or exploitation-rate equalisation); rather, it is the condition we introduce in order to identify the reduction coefficients from the observed wages. Besides the geometric intuition, there is also an economic intuition behind this assumption: for sectors whose reproduction constraint is ``tight'' (where wages barely suffice for the reproduction of labour power), the surplus in the two spaces should be consistent. For instance, in a sector where workers barely make ends meet, wages can only buy essential goods, with no monetary surplus left; then all of the workers' labour should also just be enough to exchange for the necessary means of subsistence, leaving no surplus labour.
	
	Under Assumption~\ref{def:proportional}, there exists $\theta>0$ such that
	\begin{equation}\label{eq:theta_relation}
		\w^T(I-M(r)) = \theta\,\cvec^T(I-M_0).
	\end{equation}
	Under the conditions $\lambda_{\max}(M_0)<1$ and $r\in[0,r^*)$, the matrix $I-M_0$ is invertible. Right-multiplying both sides of \eqref{eq:theta_relation} by $(I-M_0)^{-1}$ gives
	\begin{equation}\label{eq:derivation_1}
		\cvec^T = \frac{1}{\theta}\,\w^T(I-M(r))(I-M_0)^{-1}.
	\end{equation}
	Equation \eqref{eq:derivation_1} shows that given $\w$ and $\theta$, $\cvec$ is uniquely determined. However, what we actually care about is not the absolute scale of $\cvec$---the absolute level of the reduction coefficients can be calibrated ex post by the first macro aggregate equality (total price equals total value)---but rather its \textbf{relative structure} across sectors, i.e.\ the direction of $\cvec$. When we normalise $\cvec$ (for instance by setting $c_1=1$), the scalar factor $1/\theta$ appears simultaneously in both the numerator and the denominator and cancels out. Concretely, let $\cvec_{\mathrm{raw}}^T = \w^T(I-M(r))(I-M_0)^{-1}$, then $\cvec^T = (1/\theta)\,\cvec_{\mathrm{raw}}^T$, and after normalisation
	\[
	\frac{\cvec}{c_1} = \frac{(1/\theta)\,\cvec_{\mathrm{raw}}}{(1/\theta)\,(\cvec_{\mathrm{raw}})_1} = \frac{\cvec_{\mathrm{raw}}}{(\cvec_{\mathrm{raw}})_1},
	\]
	completely independent of $\theta$. Hence we do not need to solve for the specific value of $\theta$.
	
	Based on this observation, we define the following mapping.
	
	\begin{definition}[Homeomorphic mapping]\label{def:dual_map}
		Under the conditions $\lambda_{\max}(M_0)<1$ and $r\in[0,r^*)$, the matrices $I-M_0$ and $I-M(r)$ are both invertible. Define the linear transformation
		\begin{equation}\label{eq:Phi}
			\Phi_r(\w)^\top = \w^\top (I-M(r))(I-M_0)^{-1}.
		\end{equation}
		The induced mapping on the normalised section is denoted by
		\begin{equation}\label{eq:phi}
			\phi_r(\w) := \frac{\Phi_r(\w)}{\Phi_r(\w)_1}.
		\end{equation}
	\end{definition}
	
	The invertibility of $(I-M(r))(I-M_0)^{-1}$ directly implies the following properties.
	
	\begin{theorem}[Properties of the homeomorphic mapping]\label{thm:mapping_properties}
		For any $r\in[0,r^*)$:
		\begin{enumerate}[label=(\roman*)]
			\item $\Phi_r$ is a linear isomorphism from $\widetilde{\Theta}^{\mathrm{price}}(r)$ to $\widetilde{\Theta}^{\mathrm{val}}$, and $\phi_r$ is a homeomorphism from $\Theta^{\mathrm{price}}(r)$ to $\Theta^{\mathrm{val}}$.
			\item The homeomorphism implies \textbf{boundary preservation}: $\w\in\Theta^{\mathrm{price}}(r)$ makes the $j$-th reproduction constraint an equality (i.e.\ $[\w^T(I-M(r))]_j=0$) if and only if $\phi_r(\w)$ also makes the $j$-th value constraint an equality. In particular, the vertices of $\Theta^{\mathrm{price}}(r)$ are mapped to the vertices of $\Theta^{\mathrm{val}}$.
		\end{enumerate}
	\end{theorem}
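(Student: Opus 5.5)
The plan is to parametrise both cones by their \emph{surplus vectors}. For $r\in[0,r^*)$, Theorem~\ref{thm:r_star} gives $\lambda_{\max}(M(r))<1$, and $\lambda_{\max}(M_0)<1$ by hypothesis. Proposition~\ref{prop:M_r}(i) and Lemma~\ref{lem:M_spectrum} say that $M(r)$ and $M_0$ are strictly positive. The Neumann series then gives $(I-M(r))^{-1}=\sum_k M(r)^k\ge I+M(r)>0$, and likewise $(I-M_0)^{-1}>0$ entrywise.

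Set $\mathbf s:=\w^T(I-M(r))$. I claim that
\[
\widetilde{\Theta}^{\mathrm{price}}(r)=\{\mathbf s^T(I-M(r))^{-1}:\ \mathbf s\ge\mathbf 0,\ \mathbf s\neq\mathbf 0\}.
\]
The inclusion ``$\subseteq$'' is immediate from the definition. For ``$\supseteq$'', take any $\mathbf s\ge\mathbf 0$ with $\mathbf s\neq\mathbf 0$; then $\mathbf s^T(I-M(r))^{-1}$ is strictly positive because the inverse is strictly positive. So the constraint $\w\in\R^n_+\setminus\{\mathbf 0\}$ is automatic once the surplus is nonnegative and nonzero. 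The identical argument gives $\widetilde{\Theta}^{\mathrm{val}}=\{\mathbf s^T(I-M_0)^{-1}:\mathbf s\ge\mathbf 0,\ \mathbf s\neq \mathbf 0\}$. Hence both cones are simplicial cones, namely linear images of $\R^n_+\setminus\{\mathbf 0\}$, and both lie inside the open positive orthant.

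For part (i), observe that $\Phi_r(\w)^T(I-M_0)=\w^T(I-M(r))=\mathbf s^T$. In surplus coordinates, therefore, $\Phi_r$ is the identity. It is the composite of the bijection $\w\mapsto\mathbf s$ from $\widetilde{\Theta}^{\mathrm{price}}(r)$ onto $\R^n_+\setminus\{\mathbf 0\}$ with the bijection $\mathbf s\mapsto \mathbf s^T(I-M_0)^{-1}$ onto $\widetilde{\Theta}^{\mathrm{val}}$. Since it is the restriction of an invertible linear map of $\R^n$, it is a linear isomorphism of the cones, with inverse $\cvec^T\mapsto\cvec^T(I-M_0)(I-M(r))^{-1}$.

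For the normalised map, note that $\Phi_r(\w)>\mathbf 0$ on the cone, so $\Phi_r(\w)_1>0$ and $\phi_r$ is well defined and continuous. Because $\Phi_r$ commutes with positive scaling, $\phi_r$ is constant on rays. Consequently $\phi_r$ restricted to the slice $w_1=1$ is a bijection onto the slice $c_1=1$, and its inverse is $\cvec\mapsto\Phi_r^{-1}(\cvec)/\Phi_r^{-1}(\cvec)_1$. This inverse is continuous by the same positivity argument, so $\phi_r$ is a homeomorphism. I expect the only real obstacle here to be the surjectivity onto the whole cone together with the nonvanishing of the normalising coordinate, and the strict positivity of the two inverses is exactly what settles both.

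For part (ii), the $j$-th price constraint is $[\w^T(I-M(r))]_j=s_j$, and the $j$-th value constraint at $\Phi_r(\w)$ is $[\Phi_r(\w)^T(I-M_0)]_j=s_j$. These are the same number. At $\phi_r(\w)$ the value constraint equals $s_j/\Phi_r(\w)_1$, which still vanishes if and only if $s_j=0$. Hence the same sets of constraints are tight, and this gives boundary preservation.

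It remains to identify the vertices. The slice of the simplicial cone is a simplex, and its vertices are the points at which $n-1$ of the surplus components vanish, i.e.\ $\mathbf s\propto\mathbf e_j$. Since $\phi_r$ preserves the tight-constraint pattern, it sends the vertex $\propto\mathbf e_j^T(I-M(r))^{-1}$ of $\Theta^{\mathrm{price}}(r)$ to the vertex $\propto\mathbf e_j^T(I-M_0)^{-1}$ of $\Theta^{\mathrm{val}}$. More generally it maps each face to the corresponding face.
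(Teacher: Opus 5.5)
Your proof is correct and follows the paper's own justification. The paper gives no separate proof and only remarks that the properties follow from the invertibility of $(I-M(r))(I-M_0)^{-1}$. Your identity $\Phi_r(\w)^T(I-M_0)=\w^T(I-M(r))$, under which $\Phi_r$ is the identity in surplus coordinates, is what makes that remark work.

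You also supply a step that invertibility alone does not give: the strict positivity of $(I-M_0)^{-1}$ and $(I-M(r))^{-1}$ via the Neumann series. This is needed to show that the orthant constraints $\w,\cvec\in\R^n_+\setminus\{\mathbf 0\}$ follow from the surplus constraints, so that $\Phi_r$ maps into and onto $\widetilde{\Theta}^{\mathrm{val}}$. It also guarantees $\Phi_r(\w)_1>0$, so $\phi_r$ is well defined, and it makes both slices simplices whose vertices correspond.
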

	
	This theorem indicates that, starting from any wage structure that satisfies the reproduction floor, we can uniquely find a set of reduction coefficients that also satisfy the reproduction floor, and the ``tightness'' structure of the wage constraints is preserved.
	As the profit rate tends to zero, the matrix $M(r)$ degenerates to $M_0$. In this case, if the structures of technical inputs and labour compensation shares remain unchanged, the reduction coefficients given by the mapping method correspond exactly to the ``relative wages at zero profit rate'', i.e.\ the relative evaluation that different types of labour ought to receive in the social reproduction network after stripping away the distortion of the nominal price system caused by the profit requirement of capital. In this sense, the mapping method essentially ``de-profit-rates'': it uses the objective reproduction network to filter out the impact of profit on the wage structure and extracts the labour complexity implicit in the wage information---complexity that has been evaluated by the market.
	
	Given an observed profit rate $r^{\mathrm{obs}}$ and an observed relative wage $\w^{\mathrm{obs}}$, the reduction coefficients are constructed by the following steps:
	
	\begin{enumerate}
		\item Compute the normalised reduction coefficients $\cvec^{\mathrm{mapping}} = \phi_{r^{\mathrm{obs}}}(\w^{\mathrm{obs}})$. This step essentially solves the linear system
		\[
		\cvec^T(I-M_0) = \w^{\mathrm{obs},T}(I-M(r^{\mathrm{obs}})),
		\]
		normalises its solution $\cvec_{\mathrm{raw}}$ to obtain $\cvec^{\mathrm{mapping}} = \cvec_{\mathrm{raw}} / (\cvec_{\mathrm{raw}})_1$.
		
		\item Calibrate the scale using the first macro aggregate equality ``total price equals total value'' and compute the monetary expression coefficient
		\[
		\kappa = \frac{P^*}{F(\cvec^{\mathrm{mapping}})},
		\]
		where $P^*$ is the observed total price and $F(\cdot)$ is the total value function defined in \eqref{eq:value_functional}. The final reduction coefficients are $\tilde{\cvec} = \kappa \cdot \cvec^{\mathrm{mapping}}$. At this point the first equality holds automatically.
		
		\item The deviation of the second equality ``total profit equals total surplus value'' is computed ex post in this procedure as
		\[
		\Delta_2 = \frac{|S(\tilde{\cvec}) - \Pi^*|}{\Pi^*}.
		\]
		This deviation reflects the discrepancy between the assumption of proportional sectoral surpluses and the actual economic structure.
	\end{enumerate}
	
	% ================================================================
	\section{Numerical Example}
	% ================================================================
	
	This section uses a three-sector numerical example to fully demonstrate the working mechanism of the theoretical framework developed above, giving the abstract geometric arguments a concrete quantitative illustration. All parameters in the example satisfy the assumptions made earlier.
	
	The economy contains three sectors. The intermediate input matrix and the fixed capital stock matrix are set as
	\[
	A=
	\begin{bmatrix}
		0.15 & 0.18 & 0.12\\
		0.20 & 0.12 & 0.15\\
		0.10 & 0.15 & 0.18
	\end{bmatrix},
	\qquad
	K=
	\begin{bmatrix}
		0.40 & 0.35 & 0.30\\
		0.30 & 0.45 & 0.25\\
		0.25 & 0.30 & 0.40
	\end{bmatrix}.
	\]
	
	The vector of sectoral depreciation rates is $\boldsymbol{\delta}=(0.10,\,0.12,\,0.08)^T$, and $\hat\delta=\mathrm{diag}(\boldsymbol{\delta})$. The current depreciation matrix is $D=K\hat\delta$, and the composite material input matrix is
	\[
	\tilde A = A+D =
	\begin{bmatrix}
		0.190 & 0.222 & 0.144\\
		0.230 & 0.174 & 0.170\\
		0.125 & 0.186 & 0.212
	\end{bmatrix}.
	\]
	One can verify that $\lambda_{\max}(\tilde A)=0.5519<1$, satisfying Assumption~\ref{ass:A_L}.
	
	The direct labour coefficient matrix, consumption matrix, and gross output vector are set as
	\[
	L=\mathrm{diag}(0.40,\,0.60,\,0.35),\qquad
	B=
	\begin{bmatrix}
		0.25 & 0.30 & 0.20\\
		0.20 & 0.25 & 0.15\\
		0.22 & 0.28 & 0.25
	\end{bmatrix},\qquad
	\mathbf x=
	\begin{bmatrix}
		100\\
		80\\
		120
	\end{bmatrix}.
	\]
	
	From Definition~\ref{def:M0} we compute the basic labour reproduction matrix $M_0=L(I-\tilde A)^{-1}B$; its maximal eigenvalue is
	\[
	\lambda_{\max}(M_0)=0.7184<1.
	\]
	We also compute the extended material matrix $\hat A=\tilde A+BL$, obtaining $\lambda_{\max}(\hat A)=0.8749<1$. By Theorem~\ref{thm:existence}, the strictly value feasible cone is non-empty, and the normalised value feasible region $\Theta^{\mathrm{val}}$ is a bounded closed convex set.
	
	Taking sector~1's labour as the num\'eraire ($c_1=1$), the equal-exploitation point is given by the normalised left Perron eigenvector of $M_0$:
	\[
	\mathbf y^* = (1.0000,\;1.4523,\;0.8736)^T \in \Theta^{\circ}.
	\]
	The objective rate of exploitation common to all sectors at this benchmark point is
	\[
	e^* = \frac{1-\lambda_{\max}(M_0)}{\lambda_{\max}(M_0)} = 39.20\%.
	\]
	
	We choose a test profit rate $r = 11.55\%$ (which lies within $[0,r^*]$) and take the Perron eigenvector of $M(r)$ as the relative wage structure $\mathbf{w}^{\mathrm{rel}}\in\Theta^{\mathrm{price}}(r)$. The macro profit share corresponding to this wage structure is $\gamma = 11.71\%$, which falls entirely inside the compatibility interval $[10.69\%,\,15.12\%]$. Hence the hyperplane of the second equality does intersect the value feasible region.
	Figure~\ref{fig:geometry} fully displays this geometric solution process.
	
	\begin{figure}[htbp]
		\centering
		\includegraphics[width=0.85\textwidth]{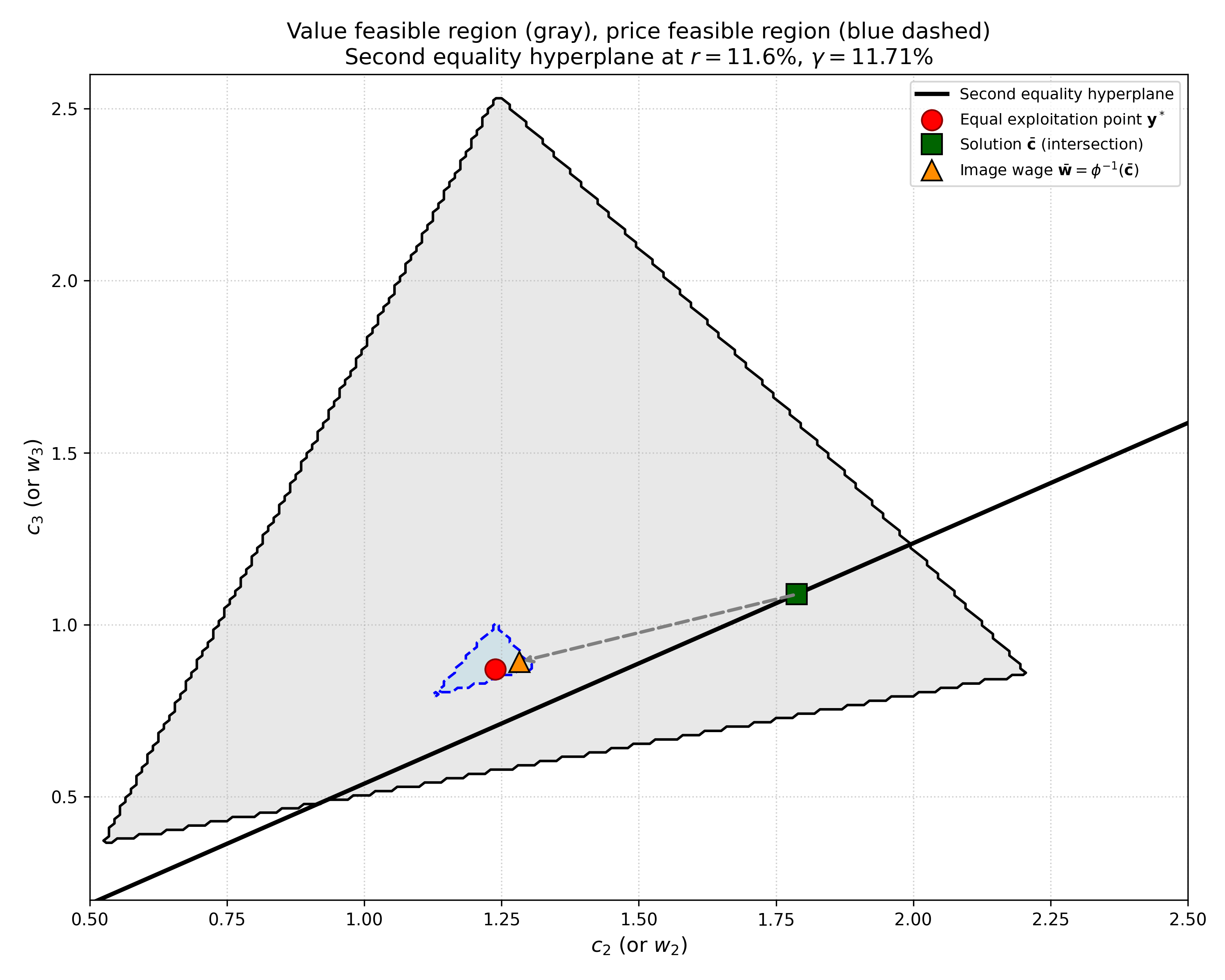}
		\caption{Value feasible region, price feasible region and geometric mapping of the two equalities (cross-section $c_1=1$)}
		\label{fig:geometry}
	\end{figure}
	
	In the figure, the grey area is the normalised value feasible region $\Theta^{\mathrm{val}}$, i.e.\ the set of all complex-labour reduction coefficient vectors that can sustain the reproduction floor of labour power in every sector. The area outlined by the blue dashed line is the price feasible region $\Theta^{\mathrm{price}}(r)$ on the same normalised section, i.e.\ the set of all relative nominal wage structures that satisfy the wage reproduction constraint. Both feasible regions are bounded closed convex sets, and the price feasible region is completely contained inside the value feasible region.
	
	The black solid line cutting through the feasible region is the direction hyperplane $H_2(\gamma)$ corresponding to the second equality. The economic meaning of this hyperplane is: any reduction coefficient vector lying on this hyperplane makes the equality ``total profit equals total surplus value'' hold. The intersection of the hyperplane with the value feasible region forms all feasible reduction directions that satisfy the second equality. The dashed line parallel to the hyperplane represents a hyperplane under an incompatible scenario---when the profit share falls outside the compatibility interval, the hyperplane would lie completely outside the feasible region, and no feasible solution can be found.
	
	The red dot $\mathbf{y}^*$ is the equal-exploitation benchmark point. It is the unique position within the value feasible region that fully equalises the objective rates of exploitation across sectors, and is given by the Perron eigenvector of the labour reproduction matrix $M_0$. This point provides a normalised comparison base that does not depend on the nominal price system.
	
	The green square $\bar{\mathbf{c}}$ is the intersection point of the hyperplane with the value feasible region, i.e.\ the feasible reduction direction that satisfies the second equality. Its coordinates are $\bar{\mathbf{c}} = (1.0000,\;1.7868,\;1.0902)^T$. On the normalised section it corresponds to a ray direction in the value feasible cone---any reduction coefficient vector proportional to this direction can satisfy the second equality. The figure shows that $\bar{\mathbf{c}}$ does not coincide with $\mathbf{y}^*$: the equal-exploitation benchmark lies on the other side of the hyperplane, indicating that the requirement of equalised exploitation rates and the holding of the macro aggregate equalities exhibit a certain distance---yet both lie inside the same value feasible region, and this distance itself is bounded by the objective structure of the reproduction network.
	
	The orange triangle $\bar{\mathbf{w}}$ is the nominal wage vector obtained by applying the inverse of the dual mapping $\phi_r^{-1}$ to $\bar{\mathbf{c}}$; its coordinates are $\bar{\mathbf{w}} = (1.0000,\;1.6874,\;1.0233)^T$. The grey dashed arrow from $\bar{\mathbf{c}}$ to $\bar{\mathbf{w}}$ intuitively shows the one-to-one correspondence of the homeomorphic mapping between the boundaries of the feasible regions. $\bar{\mathbf{w}}$ is precisely that wage structure in the price feasible region which is mapped to $\bar{\mathbf{c}}$, satisfying the wage reproduction constraint and forming a one-to-one correspondence with the solution of the second equality through the mapping method.
	
	The intersection point $\bar{\mathbf{c}}$ determines only the direction of the reduction coefficients. To satisfy the first equality ``total price equals total value'' simultaneously, the scale must be adjusted. Compute the value of the total value function in that direction:
	\[
	F(\bar{\mathbf{c}}) = \bar{\mathbf{c}}^T L(I-\tilde A)^{-1}\mathbf{x} = 401.5912.
	\]
	Under the same price system, total price is $P^* = 427.8619$. Hence the monetary expression coefficient is $\kappa = P^* / F(\bar{\mathbf{c}}) = 1.0654$. After scaling along the direction $\bar{\mathbf{c}}$, the actual solution vector on the value feasible cone is
	\[
	\tilde{\mathbf{c}} = \kappa \cdot \bar{\mathbf{c}} = (1.0654,\;1.9037,\;1.1615)^T \in \widetilde{\Theta}^{\circ}.
	\]
	Substituting back to verify: total value $F(\tilde{\mathbf{c}}) = 427.8619 = P^*$, total surplus value $S(\tilde{\mathbf{c}}) = 50.1142 = \Pi^*$; the two macro aggregate equalities hold simultaneously within the precision of the calculation.
	
	\section{Empirical Calibration}
	
	This section employs China's 2017 inter-provincial input--output table together with provincial and sectoral employment and working-time data to carry out a systematic empirical calibration of the theoretical framework. We first report the core results using ``urban plus government consumption'' as the benchmark consumption basket, and then conduct a robustness analysis by comparing six alternative consumption basket specifications.
	
	The input--output table is the 2017 China 31-province 42-sector inter-provincial input--output table (IRIO2017), with monetary values in current-year producer prices (100 million yuan). The original table contains $31\times 42 = 1302$ production sectors. Provincial and sectoral employment figures are taken from Table~3-2 ``Urban Employed Persons and Total Wages by Region and Sector'' of the \textit{China Labour Statistical Yearbook 2019}; the statistical coverage is all employed persons in urban units, including staff and workers on the job, dispatched workers, and other employed persons. This coverage best matches the statistical coverage of ``labour compensation'' in the IRIO table. Annual total working hours by province and sector are calculated as weighted averages based on the weekly working-time distribution in Table~1-58 of the same yearbook: $H_j = 52 \times \sum_k w_{j,k} \cdot m_k$, where $w_{j,k}$ is the employment share of sector $j$ in working-time interval $k$, and $m_k$ is the interval midpoint (taking 4.5, 14, 29.5, 40, 44.5, 50 hours). After removing sectors with zero gross output or missing total hours, 1272 valid sectors remain.
	
	The direct input coefficient matrix is $A = Z \cdot \mathrm{diag}(\mathbf{x})^{-1}$. Data on fixed capital depreciation are taken from the ``depreciation of fixed assets'' column in the value-added part of the IRIO table. A uniform depreciation rate $\delta = 0.05$ is assumed, and the total capital stock of each sector is backed out. The structure of the origin of capital goods is assumed to be consistent with the economy-wide gross fixed capital formation. Labour coefficients are $l_j = H_j / x_j$, forming the diagonal matrix $L$.
	
	The observed wage vector $\w^{\mathrm{obs}}$ is obtained by dividing total labour compensation of each sector by its total hours to get hourly wage rates, which are then normalised with the first valid sector (Beijing agriculture, forestry, animal husbandry and fishery) as the benchmark. It should be honestly pointed out that because the employment figure for agriculture in the ``urban units'' coverage only includes formal workers on state farms etc., its hourly wage rate is on the high side due to the statistical coverage, causing the normalised relative wages of other sectors to be numerically low.
	
	In order to approximate the socially necessary standard of labour-power reproduction as closely as possible, the benchmark specification uses ``urban resident consumption plus government consumption'' to construct the consumption matrix $B$. The concrete steps are: for each province, extract the vectors of urban resident consumption and government consumption, and compute the average annual consumption vector per employed person for that province; let the consumption basket of sector $k$ be the employment-weighted average of the consumption baskets of the provinces it spans; divide the annual consumption vector per worker by the sector's annual working hours per worker to obtain the consumption coefficient per labour hour. The entire construction relies only on published statistical data, without using any sectoral wage data, so as to avoid circular reasoning as far as possible.
	
	To examine the robustness of the conclusions with respect to the specification of the consumption basket, five alternative consumption basket specifications are constructed for comparison: rural consumption only; weighted rural--urban consumption (weight taken as China's 2017 urbanisation rate 0.585); urban consumption only; rural plus government consumption; weighted rural--urban plus government consumption. All six scenarios use the same technology matrices, wage vector and profit-rate parameters, and only differ in the construction of the consumption matrix by adjusting urban/rural weights and the inclusion of government consumption.
	
	Under the benchmark specification (urban + government consumption), the spectral radius of the basic labour reproduction matrix is $\rho(M_0) = 0.6744$, and that of the extended material matrix is $\rho(\hat A) = 0.7807$, both significantly below unity. According to Theorem~3, the strictly value feasible cone is non-empty, and the normalised value feasible region is a bounded closed convex set. The common exploitation rate corresponding to the equal-exploitation benchmark point is $e^* = 48.29\%$.
	
	The observed uniform profit rate $r^{\mathrm{obs}}$ is obtained by dividing total social profit by total capital stock. Total profit is defined as the sum of sectoral operating surplus and net taxes on production, i.e.
	\[
	\Pi = \sum_{j=1}^{N} \bigl(\text{operating surplus}_j + \text{net taxes on production}_j\bigr).
	\]
	The total capital stock $K^{\mathrm{total}}$ is backed out from sectoral depreciation of fixed capital $d_j$ and the uniform depreciation rate $\delta=0.05$: $K^{\mathrm{total}}_j = d_j / \delta$, and then $K^{\mathrm{total}} = \sum_j K^{\mathrm{total}}_j$. Hence
	\[
	r^{\mathrm{obs}} = \frac{\Pi}{K^{\mathrm{total}}}.
	\]
	Based on the 2017 IRIO table, the profit-rate upper bound determined by pure technical conditions is $r_A = 33.27\%$, the maximal feasible profit rate is $r^* = 27.86\%$, and the observed profit rate is $r^{\mathrm{obs}} = 14.19\%$. The observed profit rate lies within $[0, r^*]$, indicating that the real economy is far from hitting the profit ceiling allowed by the reproduction floor of labour power. The macro profit share computed from the price system (shadow price vector generated from the observed wage and the uniform profit rate) is $\gamma^{\mathrm{price}} = 13.73\%$. The compatibility interval for the profit share is $[\gamma_{\min}^{\mathrm{crit}}, \gamma_{\max}^{\mathrm{crit}}] = [0.99\%, 78.30\%]$. The observed profit share falls completely inside the compatibility interval, showing that the conditions for value transformation hold in the real economy.
	
	Now we have three methods for constructing the reduction coefficients: (i) the homogeneous labour method, which sets the reduction coefficients equal across sectors; (ii) the wage-proxy method, which directly takes the normalised observed wage as the reduction coefficients; and (iii) the homeomorphic mapping method. Table~\ref{tab:baseline_results} reports the relative deviation of the second equality for the three construction methods (the first equality has already been matched via the monetary expression coefficient). The relative mapping deviation is computed as
	\[
	\Delta= \frac{\text{macro profit share obtained by the respective method}}{\text{actual macro profit share}} - 1.
	\]
	The deviation of the mapping method is 41.20\%, which is significantly better than that of the wage-proxy method (50.92\%) and the homogeneous labour method (70.50\%).
	
	\begin{table}[htbp]
		\centering
		\caption{Second-equality deviations of the three methods under the benchmark scenario}
		\label{tab:baseline_results}
		\begin{tabular}{@{}lccc@{}}
			\toprule
			\textbf{Method} & \textbf{Relative deviation} & \textbf{Note} \\
			\midrule
			Homogeneous labour   & 70.50\% & Ignores all heterogeneity \\
			Wage proxy           & 50.92\% & Involves circular reasoning \\
			Mapping method       & 41.20\% & Assumes proportional sectoral surpluses \\
			\bottomrule
		\end{tabular}
	\end{table}
	
	Table~\ref{tab:robustness} summarises the indicators under the six consumption basket specifications. Four robust conclusions can be drawn from it.
	
	\begin{table}[htbp]
		\centering
		\caption{Robustness comparison across six consumption basket specifications}
		\label{tab:robustness}
		\begin{tabular}{@{}lcccccc@{}}
			\toprule
			\textbf{Scenario} & 
			\makecell[tl]{$\rho(M_0)$} & 
			\makecell[tl]{$\gamma_{\min}^{\mathrm{crit}}$} & 
			\makecell[tl]{$\gamma_{\max}^{\mathrm{crit}}$} & 
			\makecell[tl]{$e^*$} & 
			\makecell[tl]{Mapping\\ deviation} & 
			\makecell[tl]{Number of\\ negative\\ coefficients} \\
			\midrule
			Rural only & 0.0999 & 2.17\% & 99.11\% & 900.63\% & 139.57\% & 0 \\
			Rural--urban weighted & 0.1837 & 1.70\% & 98.60\% & 444.48\% & 105.34\% & 1 \\
			Urban only & 0.2464 & 1.43\% & 98.20\% & 305.86\% & 79.68\% & 2 \\
			Rural + government & 0.6678 & 1.80\% & 84.64\% & 49.75\% & 97.07\% & 1 \\
			Rural--urban weighted + government & 0.6716 & 1.27\% & 80.62\% & 48.90\% & 65.24\% & 2 \\
			Urban + government & 0.6744 & 0.99\% & 78.30\% & 48.29\% & 41.20\% & 3 \\
			\bottomrule
		\end{tabular}
	\end{table}
	
	First, value transformation holds under all scenarios. The observed profit share of 13.73\% lies inside the compatibility interval for all six scenarios. This confirms the robustness of value transformation: irrespective of how the consumption basket is specified, as long as the economy yields a physical surplus, the two macro aggregate equalities are guaranteed to be satisfiable.
	
	Second, the deviation of the mapping method narrows as the consumption level rises. From 139.57\% for rural consumption only down to 41.20\% for urban + government consumption, the descending trajectory of the mapping method's deviation suggests that a more complete consumption basket (including urban resident consumption and government public services) may make the sectoral surplus structures in the wage space and in the value space more congruent, i.e.\ the assumption of proportional sectoral surpluses holds to a higher degree. The smallest deviation occurs under the urban + government specification, which is why we take it as the benchmark.
	
	Third, the higher the ``cost'' of the consumption basket, the larger $\rho(M_0)$ and the lower the upper bound of the compatibility interval. This empirically confirms that a more expensive reproduction cost of labour power implies a smaller economic surplus space. Under rural consumption only the surplus space is enormous (upper bound 99.11\%), whereas switching to urban plus government consumption compresses the upper bound to about 78\%.
	
	Fourth, negative values only appear when the consumption basket is ``too expensive''. Under rural consumption only, all the mapping reduction coefficients for Beijing's sectors are positive; as the cost of the consumption basket rises, negative values begin to appear (concentrated in three sectors: petroleum extraction, petroleum processing, and water production). This indicates that the cause of negative values is the consumption basket overestimating the value of labour power: when the reproduction floor is pushed above the actual wage level of certain low-wage sectors, the sectoral surplus becomes negative. It also suggests that the proportional surplus assumption is more likely to hold in the sense of ``necessary consumption'' than in the sense of ``actual consumption''.
	
	Taking the benchmark scenario (urban + government consumption) as an example, the ratio of the mapping reduction coefficient to the relative wage ($c/w$) for Beijing's sectors ranges between $[-6.03, 1.33]$. Sectors with a ratio larger than one (such as coal mining 1.28, chemicals 1.20, general machinery 1.13, finance 1.13) imply that, absent capital's profit claim, the market would price these sectors' labour higher; sectors with a ratio smaller than one (such as accommodation and catering 0.02, education 0.002) imply that either the market would price them lower absent profit, or the wage data fail to reflect the full reproduction cost due to statistical coverage, and the mapping method amplifies the inequality of the wage structure.
	
	Cross-province comparisons further reveal the differences in the $c/w$ ratio for the same sector across provinces. Taking the manufacture of communication equipment, computers and other electronic equipment as an example, under the benchmark scenario the $c/w$ ratio is 0.18 for Shanghai, 0.37 for Guangdong, but 0.996 for Fujian---the relationship between wages and reduction coefficients for the same sector differs markedly across provinces, reflecting the spatial heterogeneity of the social-historical conditions of labour-power reproduction. The accommodation and catering sector, on the other hand, exhibits cross-province stability in its $c/w$ ratio (most provinces concentrating between 0.85 and 1.05), yet huge differences in the absolute level of the reduction coefficient: 0.004 for Beijing, $-0.055$ for Shanghai, and as high as 2.17 for Henan.
	
	To further diagnose the performance of the mapping method at the sectoral level, we classify all 1272 valid sectors into four categories according to the signs of the wage surplus and the reduction coefficient (Table~\ref{tab:robustness} summarises the aggregate statistics):
	\begin{itemize}[nosep]
		\item \textbf{Type A: doubly unsustainable sectors} (wage surplus $< 0$ and reduction coefficient $< 0$), 307 sectors in total;
		\item \textbf{Type B: sectors with local deficit but global support} (wage surplus $< 0$ but reduction coefficient $\ge 0$), 104 sectors in total;
		\item \textbf{Type C: anomalous sectors} (wage surplus $\ge 0$ but reduction coefficient $< 0$), 5 sectors in total;
		\item \textbf{Type D: normal sectors} (wage surplus $\ge 0$ and reduction coefficient $\ge 0$), 856 sectors in total.
	\end{itemize}
	
	Type~A sectors are concentrated in three domains: public service sectors such as water production and supply, education, public administration, and water conservancy and environmental management, as well as some manufacturing and mining industries. They are especially dense in economically less developed or high-cost-of-living provinces such as Qinghai, Ningxia, Tibet and Hainan. The common feature of these sectors is that, under the given ``urban + government consumption'' benchmark basket, their nominal wages are insufficient to cover the floor of social reproduction of labour power; hence they can generate neither a positive surplus in the price space nor a positive reduction coefficient in the value space. For low-wage public service sectors, a reproduction standard defined by the average consumption level of urban residents may be too stringent. This phenomenon also suggests that the socially necessary standard of labour-power reproduction exhibits significant spatial and sectoral heterogeneity; a uniform basket, while ensuring that the transformation conditions hold at the macro level, introduces systematic bias at the micro level.
	
	Type~B sectors reveal that the assumption of proportional sectoral surpluses may fail for certain sectors. These sectors have a negative wage surplus yet a positive reduction coefficient, indicating that they contribute positive value to the overall input--output network but that their market wage evaluation falls short of their objective contribution. Typical examples include the manufacture of electronic equipment, construction, and real estate in cities such as Beijing, Tianjin and Shanghai. These industries widely employ outsourced workers, and the statistical coverage of their compensation (e.g.\ piece-rate wages, project-based payments) is inconsistent, which means that the observed wage does not fully reflect the true position of these sectors within the reproduction network.
	
	There are only 5 Type~C sectors; their wage surpluses are all small positive values just above zero, while their reduction coefficients are small negative values. They are concentrated in the papermaking and printing, petroleum processing, health care, and culture sectors of Tibet, Qinghai and Ningxia. Given their tiny magnitudes and geographic concentration, we are inclined to treat them as data anomalies or statistical coverage biases (e.g.\ government subsidies recorded as wages) rather than as systematic theoretical contradictions.
	
	Type~D sectors account for 67.3\% of all valid sectors, showing that the hypothesis of identical sign between the wage surplus and the reduction coefficient holds for the majority of industries. This majority proportion confirms the effectiveness of the ``de-profit-rate'' mapping at the level of sectoral direction, and also shows that the assumption of proportional surpluses, although subject to deviations in degree, captures the main qualitative features of the economic structure.
	
	The empirical calibration of this paper has several data limitations that need to be borne in mind when interpreting the conclusions.
	
	First, the consumption matrix uses actually observed consumption expenditure, rather than the strictly theoretical concept of ``necessary consumption''. As discussed in the robustness analysis of the main text, this tends to overestimate the value of labour power and underestimate the rate of surplus value. Second, the labour compensation of agriculture, forestry, animal husbandry and fishery is on the high side due to statistical coverage issues (the urban-unit coverage includes mixed-income components), which makes the normalised relative wages and reduction coefficients of other sectors numerically low. Third, the construction of the consumption matrix assumes that all workers within the same province face the same consumption basket, and cannot capture intra-sectoral consumption differences due to skill, occupation and other dimensions. Fourth, the precision of the mapping method depends critically on an accurate definition of the ``socially necessary standard of labour-power reproduction''. The benchmark scenario adopts a uniform provincial average consumption basket, which is operationally straightforward and reproducible, but cannot capture cost differences across industries and within regions. Future improvements could include using finer-grained household survey data to construct differentiated consumption baskets along the industry--occupation--region dimensions.
	
	Despite these limitations, the empirical analysis of this paper still provides ample empirical support for the geometric solution to the transformation problem. The robustness checks under six consumption basket specifications show that the two conclusions---the conditions for value transformation hold, and the reduction coefficients obtained by the mapping method have a smaller relative deviation---do not depend on a particular choice of the consumption basket.
	
	\section{Conclusion}
	
	This paper has attempted to provide a geometric framework for a ``two-step solution'' to the transformation problem of values.
	
	The first step is an existence argument. We relaxed the complex-labour reduction coefficients from the traditional single-point hypothesis to a ``value feasible region'' bounded by objective reproduction conditions, and proved that under the physical surplus condition the normalised value feasible region is a bounded closed convex set, and that the two macro aggregate equalities can hold simultaneously for a range of the profit share endogenously determined by the reproduction network. Calculations based on China's 2017 inter-provincial input--output table show that the compatibility interval covers an extremely wide range from 0.99\% to 78.30\%, while the actual macro profit share of 13.73\% falls completely inside it, indicating that the holding of the value transformation conditions is a normal case rather than a special one in the real economy.
	
	The second step is an operational argument. By utilising the wage structure and the reproduction constraint, we obtain reduction coefficients via a homeomorphic mapping. We proved that, under the assumption of proportional sectoral surpluses, this mapping is an isomorphism between the price feasible region and the value feasible region, and the information about which sectors are at the subsistence boundary in the wage structure is fully transmitted to the reduction coefficient structure. The results show that among the three feasible schemes for reduction coefficients, the mapping method yields the smallest profit-share deviation, outperforming both the homogeneous labour method and the wage-proxy method.
	
	The theoretical framework of this paper leaves room for extension in the following directions. First, the construction of the consumption matrix currently assumes that all workers in the same province face the same consumption basket; if intra-sectoral consumption differences along dimensions such as skill and occupation are further introduced, the equal-exploitation point will exhibit richer sectoral differentiation. Second, the economic meaning of the proportional sectoral surplus assumption needs further discussion. Third, the analysis of this paper is confined to a closed economy; extending it to an open-economy framework that incorporates international trade and cross-border labour mobility is a worthwhile direction for future research.
	
	To sum up, the work of this paper shows that by introducing the reduction coefficients of complex labour---with the possible range of the coefficients strictly constrained by the objective reproduction structure---the transformation problem of values can be given a logically consistent formulation. The two-step advance of this paper from ``whether a solution exists'' to ``how to find the solution'' turns the transformation problem from a theoretical-philosophical issue into an empirical one.
	
	\bibliographystyle{unsrtnat}
	\bibliography{ref_domian}
	
	% ================================================================
	% Mathematical Appendix
	% ================================================================
	\appendix
	
	\section{Mathematical Proofs}
	
	This appendix provides the proofs of all lemmas, theorems and propositions given in the main text. Assumptions~\ref{ass:A_L} and~\ref{ass:B} are maintained throughout.
	
	\noindent\textbf{Auxiliary Lemma A.1.}
	Let $\hat A := \tilde A + BL$. If $\lambda_{\max}(\hat A) < 1$, then
	\[
	(I-M_0)^{-1}L(I-\tilde A)^{-1} = L(I-\hat A)^{-1}.
	\]
	
	\noindent\textit{Proof.}
	Let $X = L(I-\tilde A)^{-1}$, $Y = B$, so that $M_0 = XY$. Notice that $I - YX = I - B L (I-\tilde A)^{-1} = (I-\hat A)(I-\tilde A)^{-1}$. By hypothesis $\lambda_{\max}(\hat A) < 1$ and $\lambda_{\max}(\tilde A) < 1$, both $I-\hat A$ and $I-\tilde A$ are invertible, hence $I-YX$ is also invertible. Using the push-through identity $(I-XY)^{-1}X = X(I-YX)^{-1}$, we obtain
	\begin{align*}
		(I-M_0)^{-1}L(I-\tilde A)^{-1}
		&= X(I-YX)^{-1} \\
		&= L(I-\tilde A)^{-1}\bigl[(I-\hat A)(I-\tilde A)^{-1}\bigr]^{-1} \\
		&= L(I-\tilde A)^{-1}(I-\tilde A)(I-\hat A)^{-1} \\
		&= L(I-\hat A)^{-1}.
	\end{align*}
	\hfill $\square$
	
	\vspace{1em}
	\noindent\textbf{Proof of Lemma~\ref{lem:M_spectrum}.}
	By Assumption~\ref{ass:A_L}, $\tilde A\ge 0$ is irreducible and $\lambda_{\max}(\tilde A)<1$, so the Neumann series $(I-\tilde A)^{-1} = \sum_{k=0}^\infty \tilde A^k$ converges, and because $\tilde A$ is irreducible, every element is positive, i.e.\ $(I-\tilde A)^{-1} > 0$. With $L = \mathrm{diag}(l_1,\dots,l_n)$ and $l_j > 0$, we have $S := L(I-\tilde A)^{-1} > 0$. By Assumption~\ref{ass:B}, $B\ge 0$ has no zero column, hence for any $i,j$,
	\[
	(M_0)_{ij} = \sum_{k=1}^n S_{ik} B_{kj} \ge S_{ik_0} B_{k_0 j} > 0,
	\]
	so $M_0 > 0$ is a strictly positive matrix. A strictly positive matrix is necessarily irreducible and aperiodic. By the Perron--Frobenius theorem, its maximal eigenvalue is a simple positive real root, and there correspond unique (up to scalar multiples) strictly positive left and right eigenvectors. \hfill $\square$
	
	\vspace{1em}
	\noindent\textbf{Proof of Theorem~\ref{thm:existence}.}
	Set $\hat A := \tilde A + BL$.
	
	\noindent (i) $\Rightarrow$ (ii): If $\widetilde{\Theta}^\circ \neq \emptyset$, there exists $\cvec > \mathbf 0$ such that $\cvec^T (I-M_0) > \mathbf 0^T$. Let $\alpha = \max_j [\cvec^T M_0]_j / c_j$; then $\alpha < 1$ and $\cvec^T M_0 \le \alpha \cvec^T$. By the Collatz--Wielandt formula, $\lambda_{\max}(M_0) \le \alpha < 1$.
	
	\noindent (ii) $\Rightarrow$ (i): If $\lambda_{\max}(M_0) < 1$, take the strictly positive left Perron eigenvector $\mathbf y^* > \mathbf 0$ satisfying $\mathbf y^{*T} M_0 = \lambda_{\max}(M_0) \mathbf y^{*T}$. Then $\mathbf y^{*T}(I-M_0) = (1-\lambda_{\max}(M_0)) \mathbf y^{*T} > \mathbf 0^T$, so $\mathbf y^* \in \widetilde{\Theta}^\circ$, and the strict cone is non-empty.
	
	\noindent (ii) $\Leftrightarrow$ (iii): The matrices $M_0 = L(I-\tilde A)^{-1}B$ and $C = (I-\tilde A)^{-1}BL$ share the same non-zero eigenvalues. Let $\lambda = \lambda_{\max}(M_0)$; there exists $\mathbf u > \mathbf 0$ such that $(I-\tilde A)^{-1}BL\mathbf u = \lambda \mathbf u$. Left-multiplying by $(I-\tilde A)$ gives $BL\mathbf u = \lambda(I-\tilde A)\mathbf u$, which rearranges to $(\tilde A + \lambda^{-1}BL)\mathbf u = \mathbf u$. Hence $\lambda_{\max}(\tilde A + \lambda^{-1}BL) = 1$. Define $\phi(t) = \lambda_{\max}(\tilde A + tBL)$ for $t\ge 0$; by irreducibility $\phi$ is continuous and strictly increasing. Thus $\lambda < 1 \iff \lambda^{-1} > 1 \iff \phi(1) < \phi(\lambda^{-1}) = 1 \iff \lambda_{\max}(\tilde A+BL) < 1$. The three statements are equivalent. \hfill $\square$
	
	\vspace{1em}
	\noindent\textbf{Proof of Proposition~\ref{prop:convex}.}
	Since $\lambda_{\max}(M_0) < 1$, the matrix $I-M_0$ is invertible. The linear map $T\cvec = (I-M_0)^T \cvec$ sends $\R^n_+\setminus\{\mathbf 0\}$ onto $\widetilde{\Theta}^{\mathrm{val}}$, so the latter is a non-empty closed convex cone whose relative interior is $\widetilde{\Theta}^\circ$. After normalisation $\Theta^{\mathrm{val}}$ is clearly a closed convex set; only boundedness needs proof. Write $M_0 = (m_{ij})$; from $M_0 > 0$ and $\lambda_{\max}(M_0) < 1$ we have $m_{jj} < 1$ for every $j$. For any $\cvec \in \Theta^{\mathrm{val}}$, $c_1 = 1$ and $\cvec^T(I-M_0) \ge \mathbf 0^T$.
	
	Taking the first component: $1 - \sum_{i=1}^n c_i m_{i1} \ge 0$, so $\sum_{i=2}^n c_i m_{i1} \le 1 - m_{11}$. Because $m_{j1} > 0$, we obtain for each $j \ge 2$,
	\[
	c_j \le \frac{1-m_{11}}{m_{j1}} =: \overline{c}_j < \infty.
	\]
	
	Taking the $j$-th component ($j\ge 2$): $c_j - \sum_{i=1}^n c_i m_{ij} \ge 0$, isolate the terms $i=j$ and $i=1$ to get
	\[
	c_j(1-m_{jj}) \ge m_{1j} + \sum_{i=2, i\neq j}^n c_i m_{ij} \ge m_{1j},
	\]
	hence
	\[
	c_j \ge \frac{m_{1j}}{1-m_{jj}} =: \underline{c}_j > 0.
	\]
	Thus $\Theta^{\mathrm{val}}$ is bounded. \hfill $\square$
	
	\vspace{1em}
	\noindent\textbf{Proof of Theorem~\ref{thm:equilibrium}.}
	From $\mathbf y^{*T} M_0 = \lambda^* \mathbf y^{*T}$ and $\lambda^* < 1$ we have $\mathbf y^{*T}(I-M_0) = (1-\lambda^*)\mathbf y^{*T} > \mathbf 0^T$; together with $y_1^* = 1$ this gives $\mathbf y^* \in \Theta^\circ$. For any $j$,
	\[
	e_j(\mathbf y^*) = \frac{y_j^* - [\mathbf y^{*T} M_0]_j}{[\mathbf y^{*T} M_0]_j} = \frac{y_j^* - \lambda^* y_j^*}{\lambda^* y_j^*} = \frac{1-\lambda^*}{\lambda^*},
	\]
	so the exploitation rates are equal across sectors, with common value $e^* = (1-\lambda^*)/\lambda^* > 0$. Now prove uniqueness. If $\cvec \in \Theta^{\mathrm{val}}$ also equalises the exploitation rates, let the common rate be $e$; then $c_j = (1+e)[\cvec^T M_0]_j$, i.e.\ $\cvec^T M_0 = (1+e)^{-1} \cvec^T$. Hence $\cvec^T$ is a positive left eigenvector of $M_0$. By the Perron--Frobenius theorem, the positive left eigenvector is unique up to a scalar multiple, so $\cvec$ is proportional to $\mathbf y^*$. From $c_1 = y_1^* = 1$ we get $\cvec = \mathbf y^*$. \hfill $\square$
	
	\vspace{1em}
	\noindent\textbf{Proof of Proposition~\ref{prop:M_r}.}
	Write $A(r) = \tilde A + rK$, $R(r) = (I-A(r))^{-1}$, so $M(r) = L R(r) B$. For $r \in [0, r_A)$ we have $\lambda_{\max}(A(r)) < 1$. Since $\tilde A$ is irreducible and $K \ge 0$, $A(r)$ is also irreducible, hence $R(r) = \sum_{k=0}^\infty A(r)^k > 0$. With $L > 0$ and $B$ having no zero column, $M(r) > 0$. Continuity follows from the continuity of the inverse on the set of invertible matrices. If $0 \le r_1 < r_2 < r_A$, then $A(r_2) - A(r_1) = (r_2 - r_1)K \ge 0$, and because $K \neq 0$ the difference is non-zero. By the resolvent identity $R(r_2) - R(r_1) = R(r_1)(A(r_2)-A(r_1))R(r_2) > 0$, so $M(r_2) - M(r_1) = L(R(r_2)-R(r_1))B > 0$; each element is strictly increasing. For strictly positive matrices, the spectral radius is monotone increasing in the elements, and as $r \uparrow r_A$, $\lambda_{\max}(A(r)) \uparrow 1$, so every element of $R(r) \to +\infty$, hence $\lambda_{\max}(M(r)) \to +\infty$. \hfill $\square$
	
	\vspace{1em}
	\noindent\textbf{Proof of Theorem~\ref{thm:r_star}.}
	Define $f(r) = \lambda_{\max}(M(r))$ on $[0, r_A)$. By Proposition~\ref{prop:M_r}, $f$ is continuous and strictly increasing, with $f(0) = \lambda_{\max}(M_0) < 1$ and $\lim_{r \uparrow r_A} f(r) = +\infty$. By the intermediate value theorem there exists a unique $r^* \in (0, r_A)$ such that $f(r^*) = 1$. Strict monotonicity directly yields the three-regime classification. \hfill $\square$
	
	\vspace{1em}
	\noindent\textbf{Proof of Theorem~\ref{thm:duality}.}
	First prove the inclusion. Take any $0 \le r_1 < r_2 \le r^*$ and $\w \in \Theta^{\mathrm{price}}(r_2)$; then $\w^T (I-M(r_2)) \ge \mathbf 0^T$. By Proposition~\ref{prop:M_r}, $M(r_2) > M(r_1)$, so
	\[
	\w^T (I-M(r_1)) = \w^T(I-M(r_2)) + \w^T(M(r_2)-M(r_1)) > \mathbf 0^T,
	\]
	hence $\w \in \Theta^{\mathrm{price}}(r_1)$. The inclusion holds. Now prove strictness. Since $r_1 < r^*$, $\Theta^{\mathrm{price}}(r_1)$ is non-empty and has interior points. Take a boundary point $\w^\sharp$; then there exists $j_0$ such that $[\w^{\sharp T}(I-M(r_1))]_{j_0} = 0$. Consequently $[\w^{\sharp T}(I-M(r_2))]_{j_0} = -[\w^{\sharp T}(M(r_2)-M(r_1))]_{j_0} < 0$, so $\w^\sharp \notin \Theta^{\mathrm{price}}(r_2)$. Hence the inclusion is strict. When $r = r^*$, $\lambda_{\max}(M(r^*)) = 1$. Let $\mathbf y_r^* > 0$ be its normalised left Perron vector, satisfying $\mathbf y_r^{*T} M(r^*) = \mathbf y_r^{*T}$ and $y_{r,1}^* = 1$; then clearly $\mathbf y_r^* \in \Theta^{\mathrm{price}}(r^*)$. Take any $\w \in \Theta^{\mathrm{price}}(r^*)$; we have $\w^T(I-M(r^*)) \ge \mathbf 0^T$. Right-multiplying by the positive right Perron vector gives $\w^T(I-M(r^*)) = \mathbf 0^T$, so $\w$ is also a left eigenvector and thus proportional to $\mathbf y_r^*$. Normalisation yields $\w = \mathbf y_r^*$, so the set degenerates to a single point. \hfill $\square$
	
	\vspace{1em}
	\noindent\textbf{Proof of Theorem~\ref{thm:two_equalities}.}
	Fix $(r, \w^{rel})$ and the corresponding $(P^*, \Pi^*, \gamma)$. By definition, $F(\cvec) = \cvec^T L(I-\tilde A)^{-1}\mathbf x = \cvec^T \boldsymbol{\xi}$, so $F(\cvec) = P^* \iff \cvec^T \boldsymbol{\xi} = P^*$. Also $S(\cvec) = \cvec^T L(I-\tilde A)^{-1}(I-\tilde A-BL)\mathbf x$. If the first equality already holds, $S(\cvec) = \Pi^*$ is equivalent to $S(\cvec) = \gamma F(\cvec)$. Substituting the expressions and rearranging terms gives $\cvec^T L(I-\tilde A)^{-1}[-(\tilde A+BL)\mathbf x + (1-\gamma)\mathbf x] = 0$, i.e.\ $\cvec^T \boldsymbol{\eta}(\gamma) = 0$. This condition is homogeneous in $\cvec$, hence determines only the direction; the first equality then uniquely determines the positive proportionality factor once the direction is chosen. \hfill $\square$
	
	\vspace{1em}
	\noindent\textbf{Proof of Proposition~\ref{prop:ray_scale}.}
	Let $\bar{\cvec} \in H_2(\gamma) \cap \Theta^{\mathrm{val}}$; then $\bar{\cvec} \in \widetilde{\Theta}^{\mathrm{val}}$, and by Theorem~\ref{thm:two_equalities}, $S(\bar{\cvec}) = \gamma F(\bar{\cvec})$. Since $\bar{\cvec} > \mathbf 0$, $F(\bar{\cvec}) > 0$. Take $\cvec = \alpha \bar{\cvec}$ with $\alpha = P^* / F(\bar{\cvec}) > 0$. Because $\widetilde{\Theta}^{\mathrm{val}}$ is a cone, $\cvec \in \widetilde{\Theta}^{\mathrm{val}}$; by the linear homogeneity of $F$ and $S$, $F(\cvec) = P^*$ and $S(\cvec) = \Pi^*$. Conversely, if some $\cvec \in \widetilde{\Theta}^{\mathrm{val}}$ satisfies the two equalities, then $c_1 > 0$; let $\bar{\cvec} = \cvec / c_1$. Homogeneity then gives $\bar{\cvec} \in H_2(\gamma) \cap \Theta^{\mathrm{val}}$. \hfill $\square$
	
	\vspace{1em}
	\noindent\textbf{Proof of Theorem~\ref{thm:universal_compatibility}.}
	Let $\hat A = \tilde A + BL$ and $\hat S = L(I-\hat A)^{-1}$. Fix $\gamma = \gamma(r, \w^{rel})$. Proposition~\ref{prop:ray_scale} has reduced the problem to checking whether $H_2(\gamma) \cap \Theta^{\mathrm{val}}$ is non-empty. Using Auxiliary Lemma~A.1, for any $\cvec \in \Theta^{\mathrm{val}}$ define $\mathbf q^T = \cvec^T(I-M_0) \ge \mathbf 0^T$; then $\cvec^T = \mathbf q^T (I-M_0)^{-1}$. Let $\mathbf h = (I-M_0)^{-1}\mathbf e_1 > \mathbf 0$; from $c_1 = 1$ we get $\mathbf q^T \mathbf h = 1$. Thus $\Theta^{\mathrm{val}}$ is in one-to-one correspondence with the simplex $\Delta_h = \{\mathbf q \in \R^n_+ : \mathbf q^T \mathbf h = 1\}$. The direction hyperplane $\cvec^T \boldsymbol{\eta}(\gamma) = 0$ becomes, in the $\mathbf q$ coordinates, $\mathbf q^T \mathbf z(\gamma) = 0$ with $\mathbf z(\gamma) = (I-M_0)^{-1} \boldsymbol{\eta}(\gamma)$. By Auxiliary Lemma~A.1, $(I-M_0)^{-1} L(I-\tilde A)^{-1} = \hat S$, hence
	\begin{align*}
		\mathbf z(\gamma) &= \hat S \bigl[ (\tilde A+BL)\mathbf x - (1-\gamma)\mathbf x \bigr] = \hat S \bigl[ \hat A \mathbf x - (1-\gamma)\mathbf x \bigr].
	\end{align*}
	Let $\mathbf y = \hat S \mathbf x > \mathbf 0$ and $\boldsymbol{\ell} = L\mathbf x > \mathbf 0$. Noting that $\hat S (I-\hat A) = L$, we have $\hat S \hat A \mathbf x = \mathbf y - \boldsymbol{\ell}$. Thus $\mathbf z(\gamma) = \gamma \mathbf y - \boldsymbol{\ell}$. Its components are
	\[
	z_i(\gamma) = \bigl(\gamma - \gamma_i^{\mathrm{crit}}\bigr) y_i,
	\qquad
	\gamma_i^{\mathrm{crit}} = 1 - \frac{[\hat S \hat A \mathbf x]_i}{[\hat S \mathbf x]_i}
	= \frac{\ell_i}{y_i}.
	\]
	If $\gamma \in (\gamma_{\min}^{\mathrm{crit}}, \gamma_{\max}^{\mathrm{crit}})$, then $\mathbf z(\gamma)$ has both positive and negative components. On the simplex $\Delta_h$, by connectedness and continuity there exists $\mathbf q^* \in \operatorname{ri}(\Delta_h)$ such that $\mathbf q^{*T} \mathbf z(\gamma) = 0$. The corresponding $\cvec^* = (I-M_0)^{-T} \mathbf q^* \in \Theta^\circ$ satisfies the direction condition; then Proposition~\ref{prop:ray_scale} gives a strict solution on the cone. The endpoint cases admit boundary solutions, and outside the interval no solution exists. \hfill $\square$
	
\end{document}